\today \hspace{20pt} }
\renewcommand{\baselinestretch}{1.2}
\def\Ip{J_P}
\def\Ipp{I_P}
\def\ddp{\delta_P}
\def\sp{_P}
\def\ss{_S}
\def\fa{\frac{2}{\alpha}}
\begin{document}
\newtheorem{theorem}{Theorem}
\newtheorem{acknowledgement}[theorem]{Acknowledgement}
\newtheorem{axiom}[theorem]{Axiom}
\newtheorem{case}[theorem]{Case}
\newtheorem{claim}[theorem]{Claim}
\newtheorem{conclusion}[theorem]{Conclusion}
\newtheorem{condition}[theorem]{Condition}
\newtheorem{conjecture}[theorem]{Conjecture}
\newtheorem{criterion}[theorem]{Criterion}
\newtheorem{definition}[theorem]{Definition}
\newtheorem{example}[theorem]{Example}
\newtheorem{exercise}[theorem]{Exercise}
\newtheorem{lemma}{Lemma}
\newtheorem{corollary}{Corollary}
\newtheorem{notation}[theorem]{Notation}
\newtheorem{problem}[theorem]{Problem}
\newtheorem{proposition}{Proposition}
\newtheorem{solution}[theorem]{Solution}
\newtheorem{summary}[theorem]{Summary}
\newtheorem{assumption}{Assumption}
\newtheorem{examp}{\bf Example}
\newtheorem{probform}{\bf Problem}
\def\remark{{\noindent \bf Remark:\hspace{0.5em}}}

\def\qed{$\Box$}
\def\QED{\mbox{\phantom{m}}\nolinebreak\hfill$\,\Box$}
\def\proof{\noindent{\emph{Proof:} }}
\def\poof{\noindent{\emph{Sketch of Proof:} }}
\def
\endproof{\hspace*{\fill}~\qed
\par
\endtrivlist\unskip}
\def\endproof{\hspace*{\fill}~\qed\par\endtrivlist\vskip3pt}

\def\E{\mathbb{E}}
\def\eps{\varepsilon}
\def\phi{\varphi}
\def\Lsp{{\boldsymbol L}}
\def\Bsp{{\boldsymbol B}}
\def\lsp{{\boldsymbol\ell}}
\def\Ltsp{{\Lsp^2}}
\def\Lpsp{{\Lsp^p}}
\def\Linsp{{\Lsp^{\infty}}}
\def\LtR{{\Lsp^2(\Rst)}}
\def\ltZ{{\lsp^2(\Zst)}}
\def\ltsp{{\lsp^2}}
\def\ltZt{{\lsp^2(\Zst^{2})}}
\def\ninN{{n{\in}\Nst}}
\def\oh{{\frac{1}{2}}}
\def\grass{{\cal G}}
\def\ord{{\cal O}}
\def\dist{{d_G}}
\def\conj#1{{\overline#1}}
\def\ntoinf{{n \rightarrow \infty }}
\def\toinf{{\rightarrow \infty }}
\def\tozero{{\rightarrow 0 }}
\def\trace{{\operatorname{trace}}}
\def\ord{{\cal O}}
\def\UU{{\cal U}}
\def\rank{{\operatorname{rank}}}
\def\acos{{\operatorname{acos}}}

\def\SINR{\mathsf{SINR}}
\def\SNR{\mathsf{SNR}}
\def\SIR{\mathsf{SIR}}
\def\tSIR{\widetilde{\mathsf{SIR}}}
\def\Ei{\mathsf{Ei}}
\def\l{\left}
\def\r{\right}
\def\({\left(}
\def\){\right)}
\def\lb{\left\{}
\def\rb{\right\}}

\setcounter{page}{1}

\newcommand{\eref}[1]{(\ref{#1})}
\newcommand{\fig}[1]{Fig.\ \ref{#1}}

\def\bydef{:=}
\def\ba{{\mathbf{a}}}
\def\bb{{\mathbf{b}}}
\def\bc{{\mathbf{c}}}
\def\bd{{\mathbf{d}}}
\def\bee{{\mathbf{e}}}
\def\bff{{\mathbf{f}}}
\def\bg{{\mathbf{g}}}
\def\bh{{\mathbf{h}}}
\def\bi{{\mathbf{i}}}
\def\bj{{\mathbf{j}}}
\def\bk{{\mathbf{k}}}
\def\bl{{\mathbf{l}}}
\def\bm{{\mathbf{m}}}
\def\bn{{\mathbf{n}}}
\def\bo{{\mathbf{o}}}
\def\bp{{\mathbf{p}}}
\def\bq{{\mathbf{q}}}
\def\br{{\mathbf{r}}}
\def\bs{{\mathbf{s}}}
\def\bt{{\mathbf{t}}}
\def\bu{{\mathbf{u}}}
\def\bv{{\mathbf{v}}}
\def\bw{{\mathbf{w}}}
\def\bx{{\mathbf{x}}}
\def\by{{\mathbf{y}}}
\def\bz{{\mathbf{z}}}
\def\b0{{\mathbf{0}}}

\def\bA{{\mathbf{A}}}
\def\bB{{\mathbf{B}}}
\def\bC{{\mathbf{C}}}
\def\bD{{\mathbf{D}}}
\def\bE{{\mathbf{E}}}
\def\bF{{\mathbf{F}}}
\def\bG{{\mathbf{G}}}
\def\bH{{\mathbf{H}}}
\def\bI{{\mathbf{I}}}
\def\bJ{{\mathbf{J}}}
\def\bK{{\mathbf{K}}}
\def\bL{{\mathbf{L}}}
\def\bM{{\mathbf{M}}}
\def\bN{{\mathbf{N}}}
\def\bO{{\mathbf{O}}}
\def\bP{{\mathbf{P}}}
\def\bQ{{\mathbf{Q}}}
\def\bR{{\mathbf{R}}}
\def\bS{{\mathbf{S}}}
\def\bT{{\mathbf{T}}}
\def\bU{{\mathbf{U}}}
\def\bV{{\mathbf{V}}}
\def\bW{{\mathbf{W}}}
\def\bX{{\mathbf{X}}}
\def\bY{{\mathbf{Y}}}
\def\bZ{{\mathbf{Z}}}

\def\mA{{\mathbb{A}}}
\def\mB{{\mathbb{B}}}
\def\mC{{\mathbb{C}}}
\def\mD{{\mathbb{D}}}
\def\mE{{\mathbb{E}}}
\def\mF{{\mathbb{F}}}
\def\mG{{\mathbb{G}}}
\def\mH{{\mathbb{H}}}
\def\mI{{\mathbb{I}}}
\def\mJ{{\mathbb{J}}}
\def\mK{{\mathbb{K}}}
\def\mL{{\mathbb{L}}}
\def\mM{{\mathbb{M}}}
\def\mN{{\mathbb{N}}}
\def\mO{{\mathbb{O}}}
\def\mP{{\mathbb{P}}}
\def\mQ{{\mathbb{Q}}}
\def\mR{{\mathbb{R}}}
\def\mS{{\mathbb{S}}}
\def\mT{{\mathbb{T}}}
\def\mU{{\mathbb{U}}}
\def\mV{{\mathbb{V}}}
\def\mW{{\mathbb{W}}}
\def\mX{{\mathbb{X}}}
\def\mY{{\mathbb{Y}}}
\def\mZ{{\mathbb{Z}}}

\def\cA{\mathcal{A}}
\def\cB{\mathcal{B}}
\def\cC{\mathcal{C}}
\def\cD{\mathcal{D}}
\def\cE{\mathcal{E}}
\def\cF{\mathcal{F}}
\def\cG{\mathcal{G}}
\def\cH{\mathcal{H}}
\def\cI{\mathcal{I}}
\def\cJ{\mathcal{J}}
\def\cK{\mathcal{K}}
\def\cL{\mathcal{L}}
\def\cM{\mathcal{M}}
\def\cN{\mathcal{N}}
\def\cO{\mathcal{O}}
\def\cP{\mathcal{P}}
\def\cQ{\mathcal{Q}}
\def\cR{\mathcal{R}}
\def\cS{\mathcal{S}}
\def\cT{\mathcal{T}}
\def\cU{\mathcal{U}}
\def\cV{\mathcal{V}}
\def\cW{\mathcal{W}}
\def\cX{\mathcal{X}}
\def\cY{\mathcal{Y}}
\def\cZ{\mathcal{Z}}
\def\cd{\mathcal{d}}
\def\Mt{M_{t}}
\def\Mr{M_{r}}
\def\O{\Omega_{M_{t}}}
\newcommand{\figref}[1]{{Fig.}~\ref{#1}}
\newcommand{\tabref}[1]{{Table}~\ref{#1}}

\newcommand{\var}{\mathsf{var}}
\newcommand{\fb}{\tx{fb}}
\newcommand{\nf}{\tx{nf}}
\newcommand{\BC}{\tx{(bc)}}
\newcommand{\MAC}{\tx{(mac)}}
\newcommand{\Pout}{P_{\mathsf{out}}}
\newcommand{\tPout}{\tilde{P}_{\mathsf{out}}}
\newcommand{\nnn}{\nn\\}
\newcommand{\FB}{\tx{FB}}
\newcommand{\TX}{\tx{TX}}
\newcommand{\RX}{\tx{RX}}
\renewcommand{\mod}{\tx{mod}}
\newcommand{\m}[1]{\mathbf{#1}}
\newcommand{\td}[1]{\tilde{#1}}
\newcommand{\sbf}[1]{\scriptsize{\textbf{#1}}}
\newcommand{\stxt}[1]{\scriptsize{\textrm{#1}}}
\newcommand{\suml}[2]{\sum\limits_{#1}^{#2}}
\newcommand{\sumlk}{\sum\limits_{k=0}^{K-1}}
\newcommand{\eqhsp}{\hspace{10 pt}}
\newcommand{\tx}[1]{\texttt{#1}}
\newcommand{\Hz}{\ \tx{Hz}}
\newcommand{\sinc}{\tx{sinc}}
\newcommand{\tr}{\mathrm{tr}}
\newcommand{\diag}{\mathrm{diag}}
\newcommand{\MAI}{\tx{MAI}}
\newcommand{\ISI}{\tx{ISI}}
\newcommand{\IBI}{\tx{IBI}}
\newcommand{\CN}{\tx{CN}}
\newcommand{\CP}{\tx{CP}}
\newcommand{\ZP}{\tx{ZP}}
\newcommand{\ZF}{\tx{ZF}}
\newcommand{\SP}{\tx{SP}}
\newcommand{\MMSE}{\tx{MMSE}}
\newcommand{\MINF}{\tx{MINF}}
\newcommand{\RC}{\tx{MP}}
\newcommand{\MBER}{\tx{MBER}}
\newcommand{\MSNR}{\tx{MSNR}}
\newcommand{\MCAP}{\tx{MCAP}}
\newcommand{\vol}{\tx{vol}}
\newcommand{\ah}{\hat{g}}
\newcommand{\tg}{\tilde{g}}
\newcommand{\teta}{\tilde{\eta}}
\newcommand{\heta}{\hat{\eta}}
\newcommand{\uh}{\m{\hat{s}}}
\newcommand{\eh}{\m{\hat{\eta}}}
\newcommand{\hv}{\m{h}}
\newcommand{\hh}{\m{\hat{h}}}
\newcommand{\Po}{P_{\mathrm{out}}}
\newcommand{\Poh}{\hat{P}_{\mathrm{out}}}
\newcommand{\Ph}{\hat{\gamma}}
\newcommand{\mat}[1]{\begin{matrix}#1\end{matrix}}
\newcommand{\ud}{^{\dagger}}
\newcommand{\C}{\mathcal{C}}
\newcommand{\nn}{\nonumber}
\newcommand{\nInf}{U\rightarrow \infty}

\title{\huge \setlength{\baselineskip}{30pt} Spatial Interference Cancellation for Multi-Antenna Mobile Ad Hoc Networks}
\author{Kaibin Huang, Jeffrey G. Andrews,  Dongning Guo, \\Robert W. Heath, Jr., and Randall A. Berry \\ (\small Submission date: July 11, 2008)   \thanks{\setlength{\baselineskip}{15pt} K. Huang is with Yonsei University, S. Korea; J. G. Andrews and R. W. Heath, Jr. are with The University of Texas at Austin; D. Guo and R. A. Berry are with Northwestern University. Email: huangkb@yonsei.ac.kr, \{rheath, jandrews\}@ece.utexas.edu, \{dguo, rberry\}@eecs.northwestern.edu. This work was funded by the DARPA IT-MANET program under the grant W911NF-07-1-0028.}}

\maketitle

\begin{abstract}\setlength{\baselineskip}{15pt} 
Interference between nodes is a critical impairment in mobile ad hoc networks (MANETs). This paper studies the role of multiple antennas in mitigating such interference. Specifically, a network is studied in which  receivers apply zero-forcing beamforming to cancel the strongest interferers. Assuming a network with Poisson distributed transmitters and independent Rayleigh fading channels, the transmission capacity is derived, which gives the maximum number of successful transmissions per unit area. Mathematical tools from stochastic geometry are applied to obtain the asymptotic transmission capacity scaling  and characterize the impact of inaccurate channel state information (CSI). It is shown that, if each node cancels $L$ interferers, 
the transmission capacity decreases as  $\Theta(\epsilon^{\frac{1}{L+1}})$
as the outage probability $\epsilon$ vanishes.
For fixed $\epsilon$, as $L$ grows, the transmission capacity increases as
$\Theta(L^{1-\frac{2}{\alpha}})$ where $\alpha$ is the path-loss exponent. Moreover, CSI inaccuracy is shown to have no effect on the transmission capacity scaling as $\epsilon$ vanishes, provided that  the CSI training sequence has an appropriate length, which we derived. 
Numerical results suggest that canceling  merely one interferer  by each node increases the transmission capacity by an order of magnitude or more, even when the CSI is imperfect.
\end{abstract}

\section{Introduction}\label{Section:Intro}
In a mobile ad hoc network (MANET), the mutual interference between nodes poses a fundamental limit on the throughput of peer-to-peer
communication. This paper studies mitigating the effect of interference by provisioning nodes with multiple
antennas. Specifically, each receiver uses zero-forcing beamforming to cancel the interference from the strongest interferers, whereas each  transmitter simply chooses a random beam. This approach requires only limited local coordination and hence is  suitable for a MANET.

This paper considers a simple network consisting of Poisson distributed transmitters and independent Rayleigh fading channels.  We quantify the gains in network performance 
in terms of the {\em transmission capacity} (TC) as a function of the system parameters, including  the  amount and accuracy  of channel state information (CSI) at each receiver. 
The TC is defined in \cite{WeberAndrews:TransCapWlssAdHocNetwkOutage:2005} as the maximum density of successful transmissions so that a typical receiver satisfies an outage probability constraint for a target
signal-to-interference-and-noise ratio (SINR). 
In other words, TC gives the average throughput per unit area.
The derived TC scaling suggests  that multiple antennas can significantly improve the performance of MANETs, even with inaccurate CSI.

\subsection{Prior Work and Motivation}
For Poisson distributed transmitters, TC was introduced in \cite{WeberAndrews:TransCapWlssAdHocNetwkOutage:2005} for single-antenna MANETs assuming fixed transmission power and an ALOHA-like medium access control (MAC) layer. TC has also been used to study  opportunistic transmissions
\cite{WeberAndrews:TransCapAdHocNetwkDistSch:2006}, distributed scheduling \cite{HasanAndrews:GuardZoneAdHocNet:2007}, coverage \cite{GanHaenggi:RegularInterfCap:2006}, network irregularity \cite{VenHaenggi:ShotNoiseCoverage:2006}, bandwidth partitioning \cite{JindalAndrews:BandwidthPartitioning:2007}, successive interference cancellation \cite{WeberAndrews:TransCapWlssAdHocNetwkSIC:2005}, and multi-antenna transmission \cite{AndrewJeff:CapacityScalingSpatialDiversity:2006} in MANETs \cite{HaenggiAndrews:StochasticGeometryRandomGraphWirelessNetworks}. This paper
differs from \cite{AndrewJeff:CapacityScalingSpatialDiversity:2006}  in that the antennas are employed for interference cancellation instead of  interference averaging through  diversity techniques. After the publication  of preliminary results from the current investigation \cite{Huang:SpatialInterfCancel:ImpCSI:Globecom08, Huang:SpatialInterfCancel:PerCSI:Globecom08}, related work has been reported on the TC of multi-antenna MANETs that use  spatial multiplexing and space-time block coding \cite{LouieMacKay:SpatialMultiplexDiversityAdHocNetwork}, space division multiple access \cite{KounAndrews:TCScalingSDMAAdHocNetworks}, or optimally allocate  the spatial degrees of freedom  for  interference cancellation and link enhancement such as spatial multiplexing  \cite{VazeHeath:TransCapacityMultipleAntennaAdHocNetwork} and  array gain \cite{Jindal:RethinkMIMONetwork:LinearThroughput:2008}.

As discussed in \cite{WeberAndrews:TransCapAdHocNetwkDistSch:2006}, the TC is related to the more widely known  \emph{transport capacity}  introduced in \cite{GuptaKumar:CapWlssNetwk:2000}. Transport capacity is typically studied in terms of the scaling of a network's total throughput-distance product as a function of the network size, while TC gives 
 the number of single-hop transmissions possible in a specific area and in terms of the actual design parameters.  Furthermore, most work on the transport capacity assumes perfect scheduling and zero-outage, while we focus on a random access model with an outage requirement.

Besides spatial interference cancellation, there are several alternative approaches for mitigating interference in MANETs.
For example, the \emph{interference alignment} approach in \cite{CadJafar:InterfAlignment:2007} achieves the optimal number of degrees of freedom in a high signal-to-noise ratio (SNR) setting. This approach appears daunting in practice because it requires jointly designed precoders and perfect CSI of interference channels. In contrast, the current approach only requires each receiver to obtain CSI from nearby interferers and no coordination of transmit precoders. Another method for interference management, used in many practical MAC protocols, is to create an interferer-free area -- a \emph{guard zone} -- around each receiving node through carrier sensing. As shown in \cite{HasanAndrews:GuardZoneAdHocNet:2007}, optimizing the guard-zone size leads to significant TC gain for a single-antenna MANET with respect to pure random access.  The use of interference cancellation can be viewed as creating an effective guard zone without requiring that other nearby transmitters be suppressed.

Several papers have addressed other aspects of multi-antenna MANETs. For example, beamforming or directional antennas  have been integrated with the MAC protocols for MANETs to achieve higher network spatial reuse or energy  efficiency \cite{Zorzi:CrossLayerMACMIMOAdHocNetworks:2006,   Hamdaoui07:MultiHopMIMONetworkThput:2007,  Mundarath:CrossLayerAdaptiveAntAdHocNetworks:2007,   Park:SPACE_MAC_MIMO:2005,  Siam:AdaptMultiAntPowrControlWlssNetworks:2006, Ramanathan:AdHocNetworkDirectAntennas:2005, Deopuar:LinkLayerSeriveceDiffWlssNetworksSmartAnt:2007,  Sundar:MACFrameworkAdHocNetworkSmartAnt:2004,  Singh:SmartAlohaMultiHopWlssNetworks:2005,  Ramanathan:PerformAdHocNetworksBeamform:2001}.
In addition, multi-antenna techniques have been applied to enable efficient routing in MANETs \cite{
 Wu:InterestDissemDirectAntWirelessSensor:2006}.
Directional antennas have been studied for suppressing interference in MANETs by  spatial filtering \cite{Ramanathan:AdHocNetworkDirectAntennas:2005,  Deopuar:LinkLayerSeriveceDiffWlssNetworksSmartAnt:2007,  Sundar:MACFrameworkAdHocNetworkSmartAnt:2004,  Singh:SmartAlohaMultiHopWlssNetworks:2005,  Ramanathan:PerformAdHocNetworksBeamform:2001}. Directional antennas, however, are only suitable for environments with sparse scattering and low angular spread. In contrast, beamforming is applicable for both sparse and rich scattering, and is hence adopted in this paper as well as in \cite{Hamdaoui07:MultiHopMIMONetworkThput:2007,   Mundarath:CrossLayerAdaptiveAntAdHocNetworks:2007,   Park:SPACE_MAC_MIMO:2005}
for
spatial interference cancellation. Most prior work focuses on designing MAC protocols and rely on simulations for 
throughput evaluation \cite{KumarRag:MACAdHocNetSurvey:2006,   Zorzi:CrossLayerMACMIMOAdHocNetworks:2006,  Hamdaoui07:MultiHopMIMONetworkThput:2007,
  Mundarath:CrossLayerAdaptiveAntAdHocNetworks:2007,  Park:SPACE_MAC_MIMO:2005,  Siam:AdaptMultiAntPowrControlWlssNetworks:2006,
Ramanathan:AdHocNetworkDirectAntennas:2005,  Deopuar:LinkLayerSeriveceDiffWlssNetworksSmartAnt:2007,  Sundar:MACFrameworkAdHocNetworkSmartAnt:2004,  Singh:SmartAlohaMultiHopWlssNetworks:2005,  Ramanathan:PerformAdHocNetworksBeamform:2001}. In \cite{Zhang:CapImprovWlssAdHocNetworksDirectAnt:2006,  Yi:CapAdHocNetworkDirectAnt:2007}, the use of directional antennas is shown to increase the linear scaling factor of network transport
capacity, which, however,  is too coarse for quantifying the network throughput.   In view of prior work, there still lacks theoretic
characterization of the relationship between the TC of MANETs and spatial interference cancellation. 

\subsection{Contributions and Organization}
Our main contributions are summarized as follows.
\begin{enumerate}
\item Assuming Poisson distributed transmitters and spatially independently and identically distributed (i.i.d.)~Rayleigh fading channels, bounds on the probability of signal-to-interference ratio (SIR) outage are derived with either perfect or imperfect CSI. These bounds are found to be reasonably tight and lead to bounds on TC.
\item 
Let $L$ denote the number of canceled interferers per receiver. 
Irrespective of whether the CSI is perfect,
as the outage probability $\epsilon$ vanishes,
the asymptotic TC is shown to vanish as $\epsilon^{\frac{1}{L+1}}$,
which decays slower  for larger $L$.
\item For fixed $\epsilon$, as $L\rightarrow \infty$, the TC is shown to increase as $L^{1-\frac{2}{\alpha}}$ where $\alpha$ is the path-loss exponent. Hence spatial interference cancellation  is more effective for larger $\alpha$. If instead of canceling interference, the antennas are used 
to maximize the array gain, the TC scales as 
$L^{\frac{2}{\alpha}}$ as shown in \cite{AndrewJeff:CapacityScalingSpatialDiversity:2006}.\footnote{Since this work was submitted, recent results have demonstrated that linear TC scaling  with the number of antennas per node can be achieved by proportionally  allocating  the spatial degrees of freedom at each receiver to obtain array gain  and to cancel interferers, which can be interpreted as the product of above two sub-linear scalings   \cite{Jindal:RethinkMIMONetwork:LinearThroughput:2008}.} 
\item The required training sequence length for CSI estimation is derived for constraining the increase in outage probability and loss in data rate caused by imperfect interference cancellation.   Finally, the required training sequence length is  obtained for  the optimal TC scaling and shown to be proportional to $\log\frac{1}{\epsilon}$. 
\end{enumerate}

Simulation results show that canceling  a few (two to four) interferers per node is sufficient for harvesting most of the available TC gain. A capacity gain of more than an order of magnitude can be achieved by canceling  only one interferer at  each node, even with imperfect CSI. Moreover, a moderate length of the CSI training sequence is observed to be sufficient.

The remainder of this paper is organized as follows. Section~\ref{Section:Sys} describes the network and wireless channel models.
The SIR outage probability and TC are analyzed for perfect and imperfect CSI in Sections~\ref{Section:Outage:PerfectCSI} and \ref{Section:Outage:ImperCSI}, respectively. Numerical results are presented in Section~\ref{Section:Simualtion}. 

\section{Mathematical Models and Metrics} \label{Section:Sys}

\subsection{Network Model}\label{Section:NetworkModel}
The locations of potential transmitting nodes in a MANET are modeled as a 2-D Poisson point process with density $\lambda_o$ following the common approach in the literature (see e.g., \cite{Baccelli:AlohaProtocolMultihopMANET:2006}). Time is slotted and in each time-slot potential transmitting
nodes follow a random access protocol, namely that they transmit independently with a fixed probability $P_t$.  Let $T$ denote the coordinate of a  transmitting node. Given the random access protocol, the set $\acute{\Phi} = \{T\}$ is also a homogeneous Poisson point process but with smaller density $\lambda = P_t\lambda_o$ \cite{Kingman93:PoissonProc}. Each transmitting node is associated with a receiving node located at a unit  distance. Relaxing the assumption that a transmitter and a receiver  are separated by a fixed distance affects the TC scaling  only by a multiplicative factor (see e.g, \cite{WeberAndrews:TransCapAdHocNetwkDistSch:2006}).

Consider a typical receiving node located at the origin, denoted as $R_0$, and let $T_0$ be the corresponding transmitter. This constraint on $R_0$ and  $T_0$ does not compromise generality since the transmitting node process is translation invariant. Furthermore, according to Slivnyak's theorem \cite[Section~4.4]{StoyanBook:StochasticGeometry:95}, the remaining transmitting nodes, namely $\Phi = \acute{\Phi}\backslash\{T_0\}$, remain a homogeneous Poisson point process with the same density $\lambda$.

The MANET is assumed to be interference limited and thus noise is neglected for simplicity.\footnote{Addressing the effect of noise requires straightforward but tedious modifications of the current analysis.
} Consequently, the reliability of data packets received by $R_0$ is determined by the SIR. Moreover, we assume that each data link has a single stream, and communications between nodes are perfectly synchronized at symbol boundaries. All transmitting nodes are assumed to use unit transmission power.

\subsection{Channel Model}\label{Section:ChannelModel}
Let every node be equipped with $N$ antennas, so that the link between each transmitter and receiver can be modeled as an $N\times N$ multiple-input-multiple-output (MIMO) channel.  We assume narrowband channels with frequency-flat block fading.  Moreover, each
MIMO channel consists of path-loss and small-scale fading components. Specifically, the channel from  $T\in\acute{\Phi}$ to  $R_0$ is $r_T^{-\alpha/2}\bG_T$,
 where $\alpha > 2$  is the path-loss exponent, and  $\bG_T$ is an $N\times N$ matrix of i.i.d. unit circularly symmetric complex Gaussian elements (we subsequently denote the distribution as $\mathcal{CN}(0,1)$). Beamforming is applied  at each transmitter and receiver, where $\bff_0$, $\bff_T$ and $\bv_0$ denote the beamforming vectors at $T_0$,  $T\in\Phi$ and $R_0$, respectively. Then, measured at $R_0$, the received power from $T_0$ is  $W = |\bv_0^\dagger\bG_0\bff_0|^2$,  and the interference power from transmitter $T$ is $I_T = r_T^{-\alpha} |\bv_0^\dagger \bG_T\bff_T|^2$,
 where $\dagger$ represents the Hermitian  transpose matrix operation.

\subsection{Transmission Capacity}\label{Section:TxCap:Pre}
Correct decoding of received data packets requires the SIR to exceed a threshold $\theta$, which is identical for all receivers. In other words, the information rate for each link is equal to  $\log_2(1+\theta)$ assuming Gaussian signaling. Note that, with everything else the same, the outage probability increases with the transmitter density $\lambda$.
The TC under the outage constraint is thus given by 
\begin{equation}\label{Eq:TxCap}
    C(\epsilon) = (1-\epsilon)\log_2(1+\theta)\lambda
\end{equation}
where $\lambda$ is such that the outage probability $\Pout$ is $\epsilon$, i.e.,
\begin{eqnarray}
\Pout &=& \Pr(\SIR < \theta) \\
&=& \epsilon. 
\end{eqnarray} 
The metric $C(\epsilon)$ quantifies the spatial reuse efficiency of a single-hop MANET; the product $C(\epsilon) d$ is related to  the transport  capacity of a multi-hop MANET \cite{WeberAndrews:TransCapAdHocNetwkDistSch:2006}.\footnote{A new metric called \emph{random access transport capacity} has been proposed for multi-hop MANETs in a recent work \cite{Andrews:RandomAccessTranspCap}, which   generalizes TC to an end-to-end scenario under several additional assumptions.}

\subsection{Zero-Forcing Beamforming}\label{Section:ZFBeam}
From the perspective of $R_0$, the interference channel from an interferer $T$ is in effect a channel vector $\bh_T = r_T^{-\alpha/2}\bG_T\bff_T$. For convenience, we refer to  the channel norm $J_T =\|\bh_T\|^2$ as the pre-cancellation interference power for $T$. The receiver $R_0$ equipped with $N$ antennas cancels $L$ interferers where $L\leq N-1$. 
Let $\mathcal{T} \subset \Phi$ comprise  the $L$ strongest interferers to be canceled by $R_0$. Spatial interference cancellation  at $R_0$ is realized by choosing  the beamformer $\bv_0$ to be orthogonal to the beams of the $L$ strongest interferers, i.e., $\bv_0^\dagger\bh_T=0$ for every $T\in\mathcal{T}$. 

Conditioned on  interference cancellation, $R_0$ applies the remaining  $(N-L)$ spatial degrees of freedom to enhance the received signal power by maximum ratio combining. To be specific, $\bv_0$ solves  the following optimization problem
\begin{equation}\label{Eq:MRC}
\begin{aligned}
\text{maximize:} &\quad |\bv_0^\dagger\bh_0|^2 \\
\text{subject to:} &\quad \|\bv_0^\dagger \bh_T\|^2=0, \ \forall \ T\in\mathcal{T} \\
&\quad \|\bv_0\|^2=1. 
\end{aligned}
\end{equation}
Note that $\bv_0$ is uniquely determined by $\{\bh_T\mid T\in\mathcal{T}\}$ and $\bh_0$ with probability $1$.


We let transmitter $T_0$ apply a fixed transmit beamformer $\bff_0$ in lieu of an adaptive one (such as to perform maximum ratio transmission \cite{Lo:MaxRatioTx:99}) to prevent the transmitter and receiver from chasing each other's beam, so as to preserve network stability.  Since the fading channels are isotropic, $\bff_0$ is chosen to be the all-one vector normalized by $1/\sqrt{L}$.
With such beamforming, multiple transmit antennas  contribute no additional array gain.

\subsection{The Effective SIR Model: Perfect CSI}\label{Section:PerfectCSI}

We first characterize the SIR at $R_0$ assuming perfect CSI and hence perfect cancellation of the $L$ strongest interferers.
Since $\bG_0$ is isotropic, $\bG_0\bff_0$ is an i.i.d. $\mathcal{CN}(0,1)$ vector.  
  It follows from \cite[Lemma~1]{Jindal:RethinkMIMONetwork:LinearThroughput:2008} that the random variable $W$ has the following chi-square distribution with $(N-L)$ complex degrees of freedom,  
\begin{equation}\label{Eq:PDF:W}
f_W(w) = \frac{w^{N-L-1}}{\Gamma(N-L)}e^{-w}, \quad w\geq 0
\end{equation}
where $\Gamma$ denotes the gamma function. The factor $(N-L)$ specifies the array gain per link \cite{PaulrajBook}.  
The effective interference  model resulting from perfect interference cancellation is illustrated in Fig.~\ref{Fig:EffModel}(a).  
The SIR at $R_0$ is given as
\begin{equation}\label{Eq:SIR_a}
 \SIR = \frac{W}{\sum_{T\in\Phi\backslash\mathcal{T}}  I_T}.
\end{equation}

\begin{figure}
\centering
\includegraphics[width=14cm]{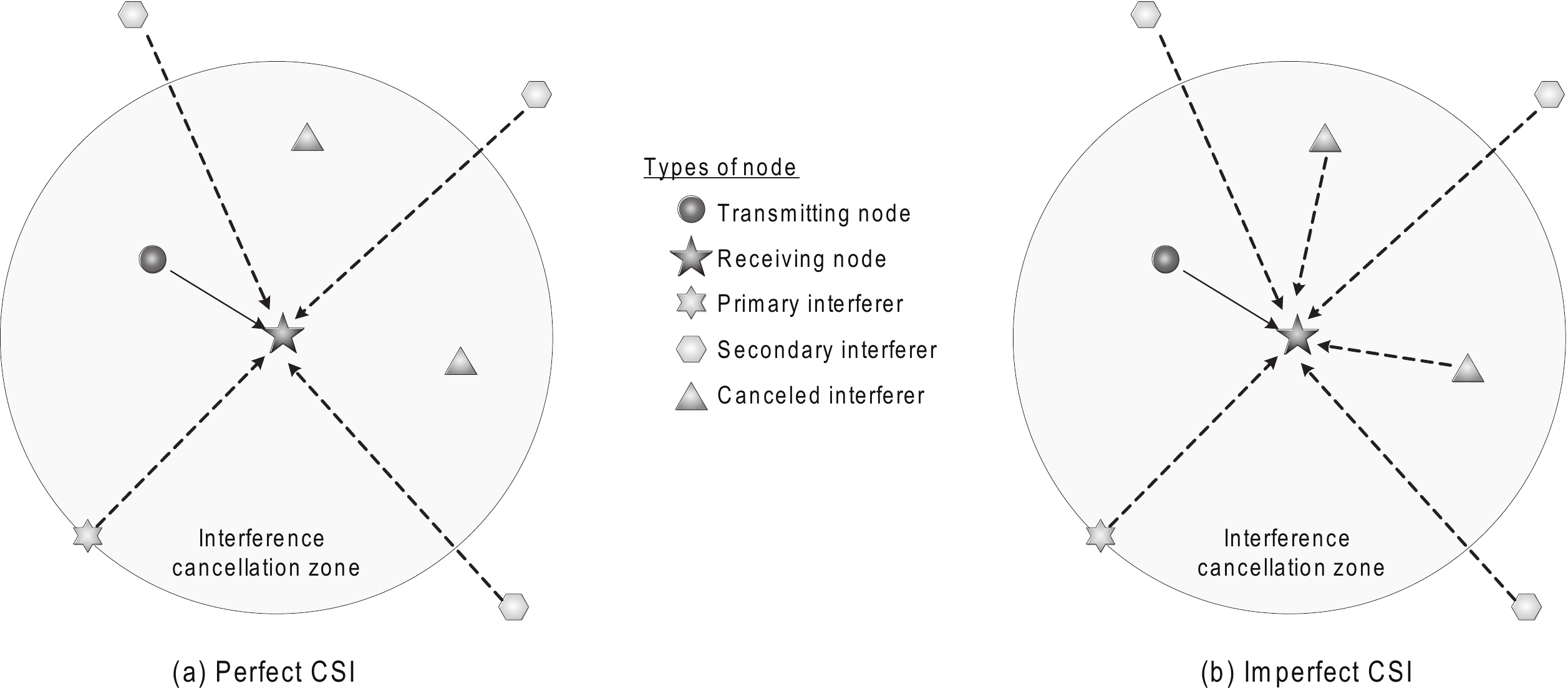}
  \caption{Effective interference  model for a typical receiver canceling two strongest interferers with (a) perfect CSI or (b) imperfect CSI. The distance in the figures decreases with  increasing  received power. The data and interference links are plotted using solid and dashed lines, respectively.}\label{Fig:EffModel}
\end{figure}

\subsection{The Effective SIR Model: Imperfect CSI}\label{Section:ImperfectCSI}\label{Section:IntfAvoid:ImpCSI}

Without the assumption of perfect and readily available CSI,
receivers estimate the signal strength of their interferers and then identify the strongest ones  by using the random training signature sequences inserted into transmitted signals \cite{VerBook}. Subsequently, each receiver requests their strongest interferers to transmit training sequences and uses them to estimate the corresponding interference channels.
Let $M$ denote the length of each training sequence.
The $L$ training sequences form an $L\times M$ matrix $\sqrt{M} \bQ$ 
where $\bQ$ consists of orthonormal row vectors \cite{MarzHoch:FastTransferCSI:2006}. 

Define $\bV$ as an $N\times L$ matrix comprising the vectors $\{\bh_T\mid T\in\mathcal{T}\}$ as columns. The signal $\bY$ received by $R_0$ during the training phase is expressed as an $N\times M$ matrix:
\begin{equation}\label{Eq:Train:Rx}
\bY = \sqrt{M}\bV\bQ +  \sum_{T\in\Phi\backslash\mathcal{T}} \bh_T \bx_T
\end{equation}
where the factor $\sqrt{M}$ normalizes the average power of each training sequence to be one,  and the $1\times M$ row vector  $\bx_T$ contains $\mathcal{CN}(0, 1)$ data symbols transmitted by node $T$. The summation term in \eqref{Eq:Train:Rx} represents interference to the CSI estimation at $R_0$. The CSI is estimated using the least-squares method to yield for every $T\in\mathcal{T}$:
\begin{eqnarray}\label{Eq:CSI:Estim}
\hat{\bh}_T &=& \frac1{\sqrt{M}} \bY\bq_T^\dagger \\
&=& \bh_T + \frac{1}{\sqrt{M}} \sum_{T'\in\Phi\backslash\mathcal{T}} \bh_{T'}\tilde{x}_{T',T} 
\end{eqnarray}
where $\bq_T$ is the training sequence sent by node $T$ and $\tilde{x}_{T',T}=\bx_T\bq_T^\dagger$  
has the distribution $\mathcal{CN}(0,1)$.\footnote{The alternative minimum mean-square-error  (MMSE) estimator requires knowledge of the covariance of the aggregate interference from the weak transmitters, which is difficult to measure accurately due to the presence of the strong interferers.}

The SIR is derived as follows. The estimated CSI is applied for computing the beamformer $\bv_0$ used at $R_0$.  Using \eqref{Eq:CSI:Estim} and under the zero-forcing constraint: $\bv_0^\dagger \hat{\bh}_T=0$, $\forall \ T\in\mathcal{T}$, the residual interference at $R_0$ after beamforming, denoted as $I_{R}$,  can be written as
\begin{eqnarray}
I_{R} &=& \sum_{T\in\mathcal{T}} \bv_0^\dagger\bh_Tx_T \nn\\
&=& - \frac{1}{\sqrt{M}} \sum_{T\in\mathcal{T}} \sum_{T'\in\Phi\backslash\mathcal{T}} 
\bv_0^\dagger \bh_{T'} \tilde{x}_{T',T}x_T.\label{Eq:ResIntfPwr:Pwr}
\end{eqnarray}
 As illustrated in Fig.~\ref{Fig:EffModel}(b), 
CSI estimation errors result in additional interference with respect to the case of perfect CSI. 
For the present case, the SIR in \eqref{Eq:SIR_a} is modified as
\begin{equation}\label{Eq:SIR_b}
\widetilde{\SIR} = \frac{W}{\sigma_{R}^2 + \sum_{T\in\Phi\backslash\mathcal{T}}  I_T}
\end{equation}
where $\sigma_{R}^2$ denotes the variance of $I_{R}$ in \eqref{Eq:ResIntfPwr:Pwr} conditioned on fixed $\Phi$ and channels. 

\section{Transmission Capacity with Perfect CSI}\label{Section:Outage:PerfectCSI}
This section focuses on the analysis of the outage probability and TC assuming prefect CSI. In particular, we derive  the TC scaling   with respect to the number of canceled interferers per node and the outage probability.

\subsection{Point Processes and Auxiliary Results}\label{Section:AuxResult}

\newcommand{\TP}{T_P}

Since the $L$ strongest interferers are canceled, we refer to
the $(L+1)$-st strongest interferer (in terms of pre-cancellation power) as
the \emph{primary interferer} and denote it as $\TP$, whose pre-cancellation
interference power is $J_P = \|\bh_{\TP}\|^2$.
The interferers with smaller pre-cancellation interference power are referred to as the \emph{secondary interferers}. There are two reasons for separating  the interferers. First, considering the primary  interferer alone yields a lower bound on the outage probability to be derived in the sequel. Second, as we show shortly, 
the secondary interferers   conditioned on $\Ip$ form a Poisson point process. 

Recall that $\Phi$ stands for the homogeneous Poisson point process of all transmitters but $T_0$.
Define a marked point process   \cite{Kingman93:PoissonProc} $\Psi = \{(T, J_T\}\mid T\in \Phi\}$ where the mark of node $T$ is its pre-cancellation interference power $J_T$.
Different interference channels are independent and hence $J_T$ depends on $T$ but not other points in $\Phi$.  Let $\mu$ and $\mu^*$ denote  the mean measures of $\Phi$ and $\Psi$, respectively. 
By applying Marking Theorem,   $\Psi$ is shown to be a Poisson process   on the product space $\mathds{R}^2 \times \mathds{R}^+$ with mean measure $\mu^*$ given as \cite{Kingman93:PoissonProc}
\begin{equation}\label{Eq:MMeasure}
\mu^*(\mathcal{B}) = \iint\limits_{(t, u) \in \mathcal{B}}\mu(dt) p(t,du)
\end{equation}
where $\mathcal{B}$ is a measurable subset of $\mathds{R}^2 \times \mathds{R}^+$ and $p(T, \cdot)$ represents the distribution of $J_T$ conditioned on $T$.  
Let $\mathcal{G} = \mathbb{R}^2 \times (g,\infty)$. Note that $\Psi\cap\mathcal{G}$ is the set of interferers whose pre-cancellation interference power is larger than $g$. From \eqref{Eq:MMeasure},    $\mu^*(\mathcal{G})$ can be obtained  as 
\begin{eqnarray}
\mu^*(\mathcal{G})&=& \lambda \int_{t\in\mathds{R}^2}\int_g^\infty p(t, du) dt\label{eq:mua}\\
&=& 2\pi \lambda \int_{0}^{\infty}\int_g^\infty r p(r, du) dr\label{eq:mub}\\
&=& 2\pi \lambda \int_{0}^{\infty}r\Pr(J_T> g \mid |T| = r)dr\label{Eq:MMeasure:a}
\end{eqnarray}
where~\eqref{eq:mua} follows from the homogeneity of $\Phi$, and~\eqref{eq:mub} uses  polar coordinates and the fact that $J_T$  depends only on the distance $r_T = |T|$ from the origin. 

Let the point process consisting of all secondary interferers  conditioned on $\Ip$ be denoted by
\begin{equation}
\Pi(\Ip)  = \{(T, J_T)\,|\, T\in\Phi, 0\leq J_T <  \Ip\}.  
\end{equation}
The distribution of $\Pi(\Ip)$ and $J_P$ are characterized by the following lemmas, which are proved in  Appendices~\ref{App:Second:Dist} and~\ref{App:Primary:PC}, respectively. 
\begin{lemma}\label{Lem:Second:Dist}
Conditioned on  $\Ip=g$,  $\Pi(\Ip)$
is a Poisson point process on $\mathbb{R}^2\times [0, g)$. Its mean measure is given by
\begin{equation}\label{Eq:MMeasure:c}
\mu^*(\mathcal{B}\mid J_P=g) = \lambda\iint\limits_{(t, u)\in\mathcal{B}} \mu(dt) p(t, du)
\end{equation}
where $\mathcal{B}$ is a measurable subset of $\mathbb{R}^2\times [0, g)$ and   $p(T, \cdot)$ is the distribution function of $J_T$ conditioned on node $T$.  
\end{lemma}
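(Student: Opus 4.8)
The plan is to derive the lemma purely from the two elementary properties of Poisson point processes — the restriction theorem (the restriction of a PPP to a measurable set is a PPP with the restricted mean measure) and independence over disjoint sets \cite{Kingman93:PoissonProc} — applied to the process $\Psi$, which by the Marking Theorem has already been identified in \eqref{Eq:MMeasure} as a PPP on $\mathbb{R}^2 \times \mathbb{R}^+$ with mean measure $\mu^*$. Fix a level $g > 0$ and split the mark axis as $[0,g) \cup [g,\infty)$; then $\Psi' := \Psi \cap (\mathbb{R}^2 \times [0,g))$ and $\Psi'' := \Psi \cap (\mathbb{R}^2 \times [g,\infty))$ are independent PPPs whose mean measures are $\mu^*$ restricted to the respective slabs, and $\Pi(J_P)$ on the event $\{J_P = g\}$ is precisely $\Psi'$.

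Before conditioning, I would record two preliminary facts. Since $\mu^*(\mathbb{R}^2 \times \mathbb{R}^+) = \mu(\mathbb{R}^2) = \infty$ whereas $\mu^*(\mathbb{R}^2 \times (g,\infty)) < \infty$ for every $g > 0$ (this is the integral \eqref{Eq:MMeasure:a}, finite because $\Pr(J_T > g \mid |T| = r)$ decays fast enough in $r$ for $\alpha > 2$), the configuration $\Phi$ a.s.\ has infinitely many interferers but only finitely many with $J_T > g$; hence the $(L+1)$-st strongest interferer is a.s.\ unique and $J_P \in (0,\infty)$ a.s. Moreover $\{J_P \le g\} = \{\Psi(\mathbb{R}^2 \times (g,\infty)) \le L\}$, an event measurable with respect to $\Psi''$, which together with the restriction theorem also recovers the Poisson count underlying \eqref{Eq:MMeasure:a}.

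The heart of the proof is that, on $\{J_P \ge g\}$, the value $J_P$ and indeed the whole configuration of the $L+1$ strongest interferers is a measurable function of $\Psi''$ alone, hence independent of $\Psi'$. Because $\{J_P = g\}$ is a null event, I would make the conditioning rigorous by a limiting argument: for $\delta > 0$ the event $\{J_P \in (g-\delta, g]\}$ is measurable with respect to $\Psi \cap (\mathbb{R}^2 \times [g-\delta,\infty))$, so conditioning on it leaves $\Psi \cap (\mathbb{R}^2 \times [0,g-\delta))$ a PPP with the restricted mean measure, while the part of $\Pi(J_P)$ lying in $\mathbb{R}^2 \times [g-\delta, J_P) \subseteq \mathbb{R}^2\times[g-\delta,g)$ has $\mu^*$-expected size $\mu^*(\mathbb{R}^2\times[g-\delta,g)) \to 0$ as $\delta \to 0$ (by continuity of $g\mapsto \mu^*(\mathbb{R}^2\times(g,\infty))$, which holds since $J_T$ has a continuous law). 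Letting $\delta \downarrow 0$ identifies the conditional law of $\Pi(J_P) = \Psi \cap (\mathbb{R}^2 \times [0,g))$ given $J_P = g$ as a PPP on $\mathbb{R}^2 \times [0,g)$ with mean measure equal to $\mu^*$ restricted to $\mathbb{R}^2 \times [0,g)$, i.e.\ \eqref{Eq:MMeasure:c}. I expect the sole obstacle to be precisely this last maneuver — conditioning on the probability-zero event $\{J_P = g\}$ — and the disjoint-slab independence above is what makes it go through cleanly; everything else is routine bookkeeping with the Marking Theorem and the Poisson restriction property.
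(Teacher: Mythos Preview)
Your proposal is correct and follows essentially the same route as the paper: both arguments decompose the marked Poisson process $\Psi$ into slabs above and below the level $g$, invoke the independence of a Poisson process over disjoint regions, and handle the null event $\{J_P=g\}$ by a shrinking-window limit (your $\delta\downarrow 0$, the paper's $\tau\to 0$). The paper phrases the verification via counting functions $\Xi(\cdot)$ on test sets $\mathcal{B},\mathcal{C}\subset\mathbb{R}^2\times[0,g)$ rather than via the restriction theorem, but the substance is identical.
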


\begin{lemma} \label{Lem:Primary:PC} The primary pre-cancellation interference power $\Ip$  has the following cumulative distribution function 
\begin{equation}\label{Eq:Ip:CDF}
\Pr(\Ip \leq g) = \sum_{k=0}^{L}\frac{\left(\nu \lambda g^{-\frac{2}{\alpha}}\right)^k}{\Gamma(k+1)}e^{-\nu\lambda g^{-\frac{2}{\alpha}}}
\end{equation}
and the probability density function
\begin{equation}\label{Eq:PDF:G}
f_{P}(g) = \frac{2 (\nu\lambda)^{L+1}}{\alpha \Gamma(L+1)}  g^{- \frac{2(L+1)}{\alpha} -1}e^{-\nu\lambda g^{-\frac{2}{\alpha}}}
\end{equation}
where $\nu = \frac{\pi \Gamma(N+\frac{2}{\alpha})}{\Gamma(N)}$. 
\end{lemma}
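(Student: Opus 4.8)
The plan is to compute the distribution of $\Ip = J_{T_P}$, the pre-cancellation interference power of the $(L+1)$-st strongest interferer, by recognizing that the ordered values $\{J_T : T\in\Phi\}$ are governed by the one-dimensional projection of the marked Poisson process $\Psi$ onto the mark axis. Concretely, I would first transform $\Psi$ into a point process on $\mathds{R}^+$ by keeping only the marks $J_T$; by the Mapping Theorem this is a (generally inhomogeneous) Poisson process on $(0,\infty)$ whose intensity is determined by $\mu^*(\mathcal{G}) = \mu^*(\mathds{R}^2\times(g,\infty))$ computed in \eqref{Eq:MMeasure:a}. The key computational step is to evaluate $\mu^*(\mathcal{G})$ explicitly. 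Since $\bh_T = r_T^{-\alpha/2}\bG_T\bff_T$ and $\bff_T$ is a fixed unit vector, $\|\bG_T\bff_T\|^2$ is chi-square with $N$ complex degrees of freedom, so $J_T = r^{-\alpha}\|\bG_T\bff_T\|^2$ given $r_T=r$. Then
\begin{equation}
\mu^*(\mathcal{G}) = 2\pi\lambda\int_0^\infty r\,\Pr\!\left(\|\bG_T\bff_T\|^2 > g\, r^{\alpha}\right)dr,
\end{equation}
and substituting $s = g\,r^\alpha$ (so $r = (s/g)^{1/\alpha}$, $dr = \frac{1}{\alpha g}(s/g)^{1/\alpha-1}ds$) turns this into $g^{-2/\alpha}$ times a constant involving $\int_0^\infty s^{2/\alpha-1}\Pr(\|\bG_T\bff_T\|^2>s)\,ds$. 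Integrating by parts (or using $\E[X^{2/\alpha}] = \frac{2}{\alpha}\int_0^\infty s^{2/\alpha-1}\Pr(X>s)\,ds$ for the chi-square variable $X$ with $N$ complex d.o.f.) yields exactly $\mu^*(\mathcal{G}) = \nu\lambda g^{-2/\alpha}$ with $\nu = \pi\Gamma(N+\tfrac{2}{\alpha})/\Gamma(N)$.

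Having identified that the counting function $N(g) := |\Psi\cap\mathcal{G}| = |\{T : J_T > g\}|$ is Poisson-distributed with mean $\nu\lambda g^{-2/\alpha}$, the event $\{\Ip \leq g\}$ is precisely the event $\{N(g) \leq L\}$: the $(L+1)$-st largest mark is at most $g$ if and only if at most $L$ marks exceed $g$. Therefore
\begin{equation}
\Pr(\Ip \leq g) = \Pr\big(N(g)\leq L\big) = \sum_{k=0}^{L}\frac{\left(\nu\lambda g^{-2/\alpha}\right)^k}{\Gamma(k+1)}e^{-\nu\lambda g^{-2/\alpha}},
\end{equation}
which is \eqref{Eq:Ip:CDF}. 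The density \eqref{Eq:PDF:G} then follows by differentiating: the derivative of the Poisson tail sum telescopes, leaving only the $k=L$ term multiplied by $\frac{d}{dg}(\nu\lambda g^{-2/\alpha}) = -\frac{2}{\alpha}\nu\lambda g^{-2/\alpha-1}$, and a sign flip from the chain rule gives $f_P(g) = \frac{2(\nu\lambda)^{L+1}}{\alpha\Gamma(L+1)}g^{-2(L+1)/\alpha - 1}e^{-\nu\lambda g^{-2/\alpha}}$.

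The main obstacle is the evaluation of $\mu^*(\mathcal{G})$ — specifically, confirming that the radial integral against the chi-square tail produces the clean closed form $\nu\lambda g^{-2/\alpha}$ with the stated $\nu$; this requires care with the change of variables and with recognizing the fractional moment $\E[X^{2/\alpha}] = \Gamma(N+2/\alpha)/\Gamma(N)$ of a unit-scale chi-square (Gamma$(N)$) random variable. Everything downstream is a routine consequence of the Poisson structure and the order-statistic characterization, and the differentiation step is mechanical once the telescoping is observed. One should also note that the finiteness of $\mu^*(\mathcal{G})$ for every $g>0$ (which needs $\alpha>2$, guaranteed by the model assumption) is what makes "$(L+1)$-st strongest interferer" well defined almost surely.
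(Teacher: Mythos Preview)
Your proposal is correct and follows essentially the same approach as the paper: both identify $\mu^*(\mathcal{G})=\nu\lambda g^{-2/\alpha}$ by integrating the chi-square tail of $\|\bG_T\bff_T\|^2$ against the radial measure, then use the Poisson count $\{J_P\leq g\}=\{\Xi(\Psi\cap\mathcal{G})\leq L\}$ to obtain the CDF, and finally differentiate with telescoping to get the density. The only cosmetic difference is that the paper evaluates the double integral by swapping the order of integration, whereas you use the substitution $s=gr^\alpha$ and the fractional-moment identity $\E[X^{2/\alpha}]=\Gamma(N+2/\alpha)/\Gamma(N)$; both routes yield the same constant $\nu$.
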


After perfect interference cancellation, the interference power of a remaining interferer $T\in\Phi\backslash\mathcal{T}$ is $ I_T = |\bv_0^\dagger \bh_T|^2$.  Define $\tilde{\bh}_T = \bh_T/\|\bh_T\|$ and $\delta_T = |\bv_0^\dagger \tilde{\bh}_T|^2$. Then $I_T$ and $J_T$ are related by $I_T = J_T\delta_T$.   The distributions of the random variables $\{\delta_T\}$ are specified in the following lemma. 
\begin{lemma} \label{Lem:Delta}
The set of random variables $\{\delta_T\mid T \in\Phi\backslash\mathcal{T}\}$ follow i.i.d. beta$(1, N-1)$ distributions as specified by the following probability density  function 
\begin{equation}
f_{\delta}(x) = (N-1) (1-x)^{N-2}, \quad x\in(0,1). \label{Eq:PDF:Delta}
\end{equation}
Furthermore, $\delta_T$ is independent with $J_T$. 
\end{lemma}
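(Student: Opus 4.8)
The idea is to exploit that the receive beamformer $\bv_0$ is built only from quantities that are statistically independent of the \emph{directions} of the secondary interference channels; once we condition on $\bv_0$, each $\delta_T$ reduces to the squared inner product of a fixed unit vector with a uniformly random point on the complex unit sphere, whose law is elementary.

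\emph{Step 1 --- isotropy of the interference channels.} Because $\bh_T=r_T^{-\alpha/2}\bG_T\bff_T$ with $\bG_T$ having i.i.d.\ $\mathcal{CN}(0,1)$ entries, conditioning on the unit vector $\bff_T$ makes $\bG_T\bff_T$ a linear combination of the independent $\mathcal{CN}(0,\bI_N)$ columns of $\bG_T$ whose coefficients have total squared modulus one, so $\bG_T\bff_T\sim\mathcal{CN}(0,\bI_N)$ whatever $\bff_T$ is, random or fixed; the transmit beamforming choice plays no role here. Thus $\bh_T$ is rotationally invariant in $\mathbb{C}^N$: the direction $\tilde\bh_T=\bh_T/\|\bh_T\|$ is uniform on the unit sphere of $\mathbb{C}^N$, independent of $J_T=\|\bh_T\|^2$, and the pairs $\{(J_T,\tilde\bh_T)\}_{T\in\Phi}$ are mutually independent and independent of $(\Phi,\bh_0)$.

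\emph{Step 2 --- dependence structure of $\bv_0$, and conditioning.} The canceled set $\mathcal{T}$ is by definition a function of $\Phi$ and of the magnitudes $\{J_T\}_{T\in\Phi}$ only, and by \eqref{Eq:MRC} the a.s.\ unique optimal $\bv_0$ is a measurable function of $\bh_0$ and of $\{\bh_{T'}=\sqrt{J_{T'}}\,\tilde\bh_{T'}:T'\in\mathcal{T}\}$. Hence $\bv_0$ is measurable with respect to $\cF:=\sigma\big(\Phi,\{J_T\}_{T\in\Phi},\bh_0,\{\tilde\bh_{T'}:T'\in\mathcal{T}\}\big)$, and in particular does not involve the directions $\{\tilde\bh_T:T\in\Phi\backslash\mathcal{T}\}$ of the un-canceled interferers. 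I would then condition on $\cF$: the vector $\bv_0$ and every $J_T$ with $T\in\Phi\backslash\mathcal{T}$ become constants, while by Step 1 the directions $\{\tilde\bh_T:T\in\Phi\backslash\mathcal{T}\}$ remain i.i.d.\ uniform on the unit sphere of $\mathbb{C}^N$. Therefore, conditionally on $\cF$, the $\delta_T=|\bv_0^\dagger\tilde\bh_T|^2$ for $T\in\Phi\backslash\mathcal{T}$ are i.i.d., each distributed as $|\bv^\dagger\bu|^2$ with $\bv$ a fixed unit vector and $\bu$ uniform on that sphere.

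\emph{Step 3 --- the one-dimensional law, and removing the conditioning.} Representing $\bu=\bz/\|\bz\|$ with $\bz\sim\mathcal{CN}(0,\bI_N)$ and applying a unitary rotation that sends $\bv$ to $\bee_1$, one gets $|\bv^\dagger\bu|^2=|z_1|^2/\sum_{i=1}^N|z_i|^2$, a ratio $X/(X+Y)$ with $X=|z_1|^2\sim\mathrm{Exp}(1)$ and $Y=\sum_{i=2}^N|z_i|^2$ an independent sum of $N-1$ unit-mean exponentials; this is $\mathrm{Beta}(1,N-1)$, with density $(N-1)(1-x)^{N-2}$ on $(0,1)$, which is \eqref{Eq:PDF:Delta}. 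Since this conditional law does not depend on the value of $\cF$ --- in particular not on $\bv_0$ nor on any $J_T$ --- I can drop the conditioning to conclude that $\{\delta_T:T\in\Phi\backslash\mathcal{T}\}$ are unconditionally i.i.d.\ $\mathrm{Beta}(1,N-1)$ and each $\delta_T$ is independent of $J_T$ (indeed jointly independent of $\cF$, hence of $\bv_0$ and of $W$ too). The only step requiring real care is Step 2: making rigorous that neither the selection of $\mathcal{T}$ nor the solution of \eqref{Eq:MRC} carries any information about the directions of the un-canceled interferers; once that is pinned down, everything else is a standard computation for the uniform distribution on the complex sphere.
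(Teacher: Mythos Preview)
Your proof is correct and follows essentially the same approach as the paper's: both exploit the isotropy of $\bh_T$ to decouple $J_T$ from $\tilde{\bh}_T$, argue that $\bv_0$ depends only on $\bh_0$ and $\{\bh_{T'}:T'\in\mathcal{T}\}$ and hence is independent of the uncanceled directions, and then identify $\delta_T$ as the squared inner product of two independent isotropic unit vectors. The only differences are cosmetic: the paper invokes the isotropy of $\bv_0$ itself and cites an external reference for the $\mathrm{Beta}(1,N-1)$ law, whereas you condition on the $\sigma$-algebra $\cF$ to freeze $\bv_0$ and derive the Beta density from the Gaussian representation by hand; your version is slightly more self-contained and makes the conditioning step more explicit, but the underlying argument is the same.
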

\begin{proof} See Appendix~\ref{App:Delta}. 
\end{proof}
\noindent It follows that the primary interference power $\Ipp$  is given as   $\Ipp = \Ip\ddp$ where $\ddp$ is a beta$(1, N-1)$ random variable. Given that $\ddp$ is random, node $T\sp$ may not contribute the largest interference power despite its dominance over other uncanceled interferers in terms of  pre-cancellation interference power. Next, the total secondary interference power can be written as $I_S = \sum_{T\in\Phi\backslash(\mathcal{T}\cup \{T\sp\})}J_T\delta_T$. This summation is a type of   \emph{shot noise} \cite{Kingman93:PoissonProc} whose probability density function  has no known closed-form expression except for some simple cases \cite{Baccelli:AlohaProtocolMultihopMANET:2006, WeberAndrews:TransCapAdHocNetwkDistSch:2006}. Nevertheless, the  conditional  first and second moments  of $I_S$ can be obtained as shown in the following lemma.
\begin{lemma}\label{Lem:InterfPwr:l} After perfect spatial interference cancellation, the secondary interference power $I\ss$  conditioned on the primary pre-cancellation interference power $\Ip = g$  has the following mean and variance  
    \begin{eqnarray}
\E[I\ss\,|\, \Ip = g] &=&     \frac{2\nu\lambda}{N(\alpha-2)}g^{1-\frac{2}{\alpha}}\label{Eq:ExpInterfPwr}\\
\var(I\ss \,|\, \Ip = g) &=& \frac{2\nu\lambda}{N(N+1)(\alpha-1)}g^{2-\frac{2}{\alpha}}.\label{Eq:IPwrSec:Var}
\end{eqnarray}
\end{lemma}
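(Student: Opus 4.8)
The plan is to realize $I\ss$ as a shot-noise functional of the conditioned secondary point process and evaluate its first two moments by Campbell's theorem and the variance formula for Poisson shot noise. Conditioned on $\Ip=g$, Lemma~\ref{Lem:Second:Dist} says that the secondary interferers $\Pi(g)$ form a Poisson point process on $\mathbb{R}^2\times[0,g)$, and by construction $I\ss=\sum_{(T,J_T)\in\Pi(g)}J_T\delta_T$, where by Lemma~\ref{Lem:Delta} the $\{\delta_T\}$ are i.i.d.\ $\mathrm{beta}(1,N-1)$ and independent of the marks $\{J_T\}$ (and, since $\bv_0$ is fixed by the canceled channels and $\bh_0$ alone and the fading is isotropic, independent of $\Ip$ as well, so conditioning on $\Ip=g$ does not change their law). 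Attaching each $\delta_T$ as a further independent mark turns $\Pi(g)$ into a Poisson process on $\mathbb{R}^2\times[0,g)\times(0,1)$ and makes $I\ss$ a linear functional of it, so Campbell's theorem gives $\E[I\ss\mid\Ip=g]=\E[\delta]\int_{\mathbb{R}^2\times[0,g)}u\,\mu^*(\mathrm d(t,u)\mid\Ip=g)$ and, since distinct Poisson atoms contribute independently, $\var(I\ss\mid\Ip=g)=\E[\delta^2]\int_{\mathbb{R}^2\times[0,g)}u^2\,\mu^*(\mathrm d(t,u)\mid\Ip=g)$, where $\delta\sim\mathrm{beta}(1,N-1)$ has $\E[\delta]=1/N$ and $\E[\delta^2]=2/(N(N+1))$.

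It then remains to evaluate $A_j:=\int_{\mathbb{R}^2\times[0,g)}u^j\,\mu^*(\mathrm d(t,u)\mid\Ip=g)$ for $j=1,2$. Passing to polar coordinates and using $J_T=r_T^{-\alpha}\|\bG_T\bff_T\|^2$ with $\|\bG_T\bff_T\|^2$ chi-square with $N$ complex degrees of freedom (that is, $\mathrm{Gamma}(N,1)$), the truncated mark moment at distance $r$ is $\int_0^g u^j\,p(r,\mathrm du)=r^{-j\alpha}\gamma(N+j,gr^\alpha)/\Gamma(N)$ with $\gamma(\cdot,\cdot)$ the lower incomplete gamma function, so $A_j=\frac{2\pi\lambda}{\Gamma(N)}\int_0^\infty r^{\,1-j\alpha}\gamma(N+j,gr^\alpha)\,\mathrm dr$. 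The substitution $s=gr^\alpha$ reduces this to $\frac{2\pi\lambda}{\alpha\Gamma(N)}g^{\,j-\frac{2}{\alpha}}\int_0^\infty s^{\frac{2}{\alpha}-j-1}\gamma(N+j,s)\,\mathrm ds$, and the last integral is handled by the identity $\int_0^\infty s^{a-1}\gamma(b,s)\,\mathrm ds=-\Gamma(a+b)/a$, valid for $\Re a<0<\Re(a+b)$; here $a=\frac{2}{\alpha}-j<0$ (for $j=1,2$, since $\alpha>2$) and $a+b=N+\frac{2}{\alpha}>0$. This yields $A_j=\frac{2\nu\lambda}{\alpha j-2}\,g^{\,j-\frac{2}{\alpha}}$ after recognizing $\nu=\pi\Gamma(N+\frac{2}{\alpha})/\Gamma(N)$, i.e.\ $A_1=\frac{2\nu\lambda}{\alpha-2}g^{1-\frac{2}{\alpha}}$ and $A_2=\frac{\nu\lambda}{\alpha-1}g^{2-\frac{2}{\alpha}}$. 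Combining with $\E[\delta]$ and $\E[\delta^2]$ reproduces \eqref{Eq:ExpInterfPwr} and \eqref{Eq:IPwrSec:Var} exactly.

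The incomplete-gamma bookkeeping above is routine; the two points deserving care are (i) the legitimacy of the Campbell/variance formulas and of interchanging expectation with the summation over $\Pi(g)$, which is guaranteed by finiteness of $A_1$ — and this is exactly where the hypothesis $\alpha>2$ is used, since for $\alpha\le2$ the integrand $r^{\,1-\alpha}\gamma(N+1,gr^\alpha)$ fails to be integrable at $r\to\infty$, while at $r\to0$ the factor $\gamma(N+j,gr^\alpha)$ vanishes like $(gr^\alpha)^{N+j}$ and causes no trouble; and (ii) the claim that the extra marks $\delta_T$ remain i.i.d.\ $\mathrm{beta}(1,N-1)$ and independent of the conditioned configuration $\Pi(g)$, which follows from Lemma~\ref{Lem:Delta} together with the rotational invariance of the $\mathcal{CN}(0,\bI)$ law of the secondary channel directions. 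I expect item (i) to be the step most worth spelling out, since it is where $\alpha>2$ enters and where Fubini must be justified rather than merely invoked.
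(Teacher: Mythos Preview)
Your proposal is correct and follows essentially the same route as the paper: both represent $I\ss$ as Poisson shot noise over the conditioned process $\Pi(g)$, pull out the independent $\delta$-moments ($\E[\delta]=1/N$, $\E[\delta^2]=2/(N(N+1))$), and then compute the integrals $A_1,A_2$ of $u$ and $u^2$ against the intensity measure. The only cosmetic difference is that the paper evaluates these integrals by writing the $\rho$-density explicitly and swapping the order of the $(r,\rho)$ integration, whereas you package the inner integral as $\gamma(N+j,gr^\alpha)$ and invoke the identity $\int_0^\infty s^{a-1}\gamma(b,s)\,ds=-\Gamma(a+b)/a$; these are the same computation.
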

\begin{proof}
See Appendix~\ref{App:InterfPwr:l}.
\end{proof}

\subsection{Bounds on  Outage Probability}\label{Section:Pout:Bounds:PerfCSI}
From \eqref{Eq:SIR_a}, the outage probability $\Pout$  can be written as 
\begin{equation}
\Pout=\Pr(I\ss + \Ipp >  W\theta^{-1}). \label{Eq:Pout:Exact}
\end{equation}
It follows that $\Pout$ can be lower bounded as 
\begin{equation}
\Pout \geq \Pr(\Ipp > W\theta^{-1})\label{Eq:PoutLB:PerfCSI}
\end{equation}
by considering only the primary interferer, which is  tight if the primary interfererence is dominant. Using $\Ipp = \Ip\delta\sp$ and \eqref{Eq:Pout:Exact}, an upper bound on $\Pout$ is obtained  as 
\begin{eqnarray}
\Pout  &\leq & \Pr(I\ss + \Ip > W\theta^{-1})\label{Eq:Pout:0}\\
&=&  \Pr(\Ip > W\theta^{-1}) + \Pr(I\ss > W\theta^{-1} - \Ip \mid \Ip \leq W\theta^{-1}) 
\Pr(\Ip \leq W\theta^{-1}) \label{Eq:Pout:UB}
\end{eqnarray}
where~\eqref{Eq:Pout:0} holds since $\delta\sp\leq 1$. 
The above upper bound can be further bounded by applying the following Chebyshev's inequality: 
\begin{equation}
\Pr(I_S \geq a\mid \Ip =g) \leq \min\left\{\frac{\var(I_S \,|\, \Ip=g)}{\left\{a -\E\left[I_S \,|\, \Ip=g\right]\right\}^2}, 1\right\}, \quad \forall \ a  > \E\left[I_S \,|\, \Ip=g\right]
\label{Eq:Chebyshev}.
\end{equation}
Based on \eqref{Eq:PoutLB:PerfCSI}, \eqref{Eq:Pout:UB} and \eqref{Eq:Chebyshev}, bounds on  the outage probability are derived as shown in the following lemma.
\begin{lemma}\label{Lem:PoutBnds}For perfect spatial interference cancellation, the outage probability satisfies 
$\Pout^{\ell}(\lambda) \leq \Pout
\leq \Pout^u(\lambda)$ where:
\begin{enumerate}
\item The lower bound is 
\begin{equation}\label{Eq:Pout:PerCSI:LB}
\Pout^\ell(\lambda) = 1 -\sum_{k=0}^L  \frac{\l(\nu\lambda\theta^{\frac{2}{\alpha}}\r)^k}{\Gamma(k+1)} \E\l[\l(\frac{W}{\delta\sp}\r)^{-\frac{2k}{\alpha}}e^{-\nu\lambda\theta^{\frac{2}{\alpha}}\l(\frac{W}{\delta\sp}\r)^{-\frac{2}{\alpha}}}\r].
\end{equation}

\item Define the subsets $\mathcal{D}_1$ and $\mathcal{D}_2$ of the product space $\mathds{R}^+\times\mathds{R}^+$ as 
\begin{eqnarray}
\mathcal{D}_1  &=& \left\{ (w, g) \mid  w\theta^{-1} -  \E[I\ss \mid J_P = g]\leq g \leq w\theta^{-1} \right\}\\
\mathcal{D}_2  &=& \left\{ (w, g) \mid  g <  w \theta^{-1} -  \E[I\ss\mid  J_P = g]  \right\}. \label{Eq:D2}
\end{eqnarray}
The upper bound is 
\begin{align}\label{Eq:Pout:PerCSI:UB}
\Pout^u(\lambda) = \Lambda_1 + \Lambda_2 + \Lambda_3
\end{align}
where
\begin{align}
  \label{eq:1}
  \Lambda_1 &= 1 -\sum_{k=0}^L  \frac{\l(\nu\lambda\theta^{\frac{2}{\alpha}}\r)^k}{\Gamma(k+1)} \E\l[W^{-\frac{2k}{\alpha}}e^{-\nu\lambda\theta^{\frac{2}{\alpha}}W^{-\frac{2}{\alpha}}}\r] \\
\Lambda_2 &= \iint\limits_{(w, g)\in\mathcal{D}_1}f_W(w)f\sp(g)dwdg \label{eq:2}\\
\Lambda_3 &= \iint\limits_{(w, g)\in\mathcal{D}_2}\min\left\{\frac{\var(I\ss \,|\, \Ip=g)}{\left\{w\theta^{-1} - g -\E\left[I\ss \,|\, \Ip=g\right]\right\}^2}, 1\right\} f_W(w)f\sp(g)dwdg \label{eq:3}
\end{align}
 with $\E(I_S \,|\, \Ip=g)$ and $\var(I_S \,|\, \Ip=g)$ given in Lemma~\ref{Lem:InterfPwr:l}. 
\end{enumerate}
\end{lemma}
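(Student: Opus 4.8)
The plan is to establish the two bounds separately, starting from the chain of inequalities already set up in \eqref{Eq:PoutLB:PerfCSI}, \eqref{Eq:Pout:UB} and \eqref{Eq:Chebyshev}, and then evaluating the relevant probabilities by conditioning on $W$ and $\delta\sp$ (for the lower bound) or on $W$ and $\Ip$ (for the upper bound), using Lemma~\ref{Lem:Primary:PC} for the law of $\Ip$ and Lemma~\ref{Lem:InterfPwr:l} for the conditional moments of $I\ss$.

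\emph{Lower bound.} First I would start from \eqref{Eq:PoutLB:PerfCSI}, namely $\Pout \geq \Pr(\Ipp > W\theta^{-1})$, and write $\Ipp = \Ip\delta\sp$ with $\Ip$, $\delta\sp$, and $W$ mutually independent (independence of $\delta\sp$ from $\Ip$ is Lemma~\ref{Lem:Delta}, and $W$ depends only on the signal channel). Conditioning on $W=w$ and $\delta\sp=x$, the event becomes $\{\Ip > w x^{-1}\theta^{-1}\}$, i.e. $\{\Ip > (w/x)\theta^{-1}\}$, whose probability is $1 - \Pr(\Ip \le (w/x)\theta^{-1})$. Plugging the CDF from \eqref{Eq:Ip:CDF} with $g = (w/x)\theta^{-1}$ gives
\[
\Pr(\Ip > (w/x)\theta^{-1}) = 1 - \sum_{k=0}^L \frac{\l(\nu\lambda\theta^{\frac{2}{\alpha}}(w/x)^{-\frac{2}{\alpha}}\r)^k}{\Gamma(k+1)} e^{-\nu\lambda\theta^{\frac{2}{\alpha}}(w/x)^{-\frac{2}{\alpha}}},
\]
and taking expectation over $(W,\delta\sp)$ produces exactly \eqref{Eq:Pout:PerCSI:LB}. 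This direction is essentially a substitution followed by an expectation.

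\emph{Upper bound.} Next I would take the upper bound \eqref{Eq:Pout:UB}, which decomposes $\Pout$ into the ``primary dominates'' term $\Pr(\Ip > W\theta^{-1})$ and the ``secondary overflow'' term on the complementary event $\{\Ip \le W\theta^{-1}\}$. The first term, handled as above but now with $\delta\sp$ replaced by $1$ (since we used $\delta\sp \le 1$ to get \eqref{Eq:Pout:0}), yields $\Lambda_1$ via the same CDF substitution. For the second term I would condition on $W=w$ and $\Ip=g$ and split the region $\{g \le w\theta^{-1}\}$ according to whether $w\theta^{-1}-g$ exceeds $\E[I\ss\mid J_P=g]$: on $\mathcal{D}_1$ (where it does not) I bound the inner probability $\Pr(I\ss > w\theta^{-1}-g \mid \Ip=g)$ by $1$, giving $\Lambda_2$; on $\mathcal{D}_2$ (where it does) I apply Chebyshev's inequality \eqref{Eq:Chebyshev} with $a = w\theta^{-1}-g$, giving $\Lambda_3$, and I keep the $\min\{\cdot,1\}$ form for safety. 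Assembling the three pieces against the joint density $f_W(w)f\sp(g)$ — which factorizes because $W$ and $\Ip$ are independent — gives \eqref{Eq:Pout:PerCSI:UB}.

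The only genuinely delicate point is justifying the independence structure used in both directions: that $W$ is independent of the pre-cancellation interference powers and of $\delta\sp$, and that $\delta\sp$ (and the $\delta_T$) are independent of $\Ip$. These follow from the isotropy of $\bG_0$ together with Lemma~\ref{Lem:Delta}, but they must be invoked carefully because $\bv_0$ is built from the interference channels; the resolution is that $W = |\bv_0^\dagger\bh_0|^2$ has the chi-square law \eqref{Eq:PDF:W} regardless of the (independently drawn) subspace that $\bv_0$ is constrained to, as recorded before \eqref{Eq:SIR_a}. Everything else is a direct computation: substitute the CDF of \eqref{Eq:Ip:CDF}, substitute the moments of Lemma~\ref{Lem:InterfPwr:l}, and integrate. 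I would relegate these substitutions to the appendix, as the statement of Lemma~\ref{Lem:PoutBnds} already anticipates.
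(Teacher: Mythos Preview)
Your proposal is correct and follows essentially the same route as the paper: substitute the CDF \eqref{Eq:Ip:CDF} into \eqref{Eq:PoutLB:PerfCSI} after conditioning on $(W,\delta\sp)$ for the lower bound, and for the upper bound evaluate $\Pr(\Ip>W\theta^{-1})$ the same way to get $\Lambda_1$, then split the complementary region $\{g\le w\theta^{-1}\}$ into $\mathcal{D}_1$ (bound the conditional probability by $1$) and $\mathcal{D}_2$ (apply Chebyshev \eqref{Eq:Chebyshev}). Your added discussion of the independence between $W$, $\Ip$, and $\delta\sp$ is more explicit than the paper's treatment but entirely in the same spirit.
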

\begin{proof}
See Appendix~\ref{App:PoutBnds}.
\end{proof}

\subsection{Asymptotic Transmission Capacity}\label{Section:AsymTxCap:PerCSI}
Using the upper and lower bounds described in Lemma~\ref{Lem:PoutBnds}, the TC 
scaling is analyzed for a large number of  canceled interferers per node ($L\rightarrow \infty$) or varnishing  outage probability ($\epsilon\rightarrow 0$) as follows.

Increasing the number of antennas at each receiver allows more interferers to be canceled, leading to higher TC. The TC scaling as  $L\rightarrow\infty$ is  given in the following theorem.
\begin{theorem}\label{Theo:TxCap:LarAnt} With perfect CSI and fixed array gain $(N-L)$, if the number of canceled interferers per node $L$ is sufficiently large, the transmission capacity is bounded as 
\begin{equation}\label{Eq:TxCap:LargeAnt} 
\frac{1}{\pi}\l[\frac{\epsilon\l(\alpha-2\r)}{2\theta\E[W^{-1}]}\r]^{\frac{2}{\alpha}} \leq \frac{C(L)}{(1-\epsilon)\log_2(1+\theta)L^{1-\frac{2}{\alpha}}} \leq \frac{2}{\pi}\l[\frac{\E[W]}{\theta(1-\epsilon)}\r]^{\frac{2}{\alpha}}.
\end{equation} 
\end{theorem}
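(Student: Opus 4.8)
The plan is to extract the scaling $C(L) \asymp L^{1-\frac{2}{\alpha}}$ from the two bounds in Lemma~\ref{Lem:PoutBnds} by determining, for each bound, the density $\lambda$ at which the outage probability equals $\epsilon$, and then substituting into \eqref{Eq:TxCap}. The key observation driving everything is that with the array gain $N-L$ held fixed, the parameter $\nu = \frac{\pi\Gamma(N+\frac{2}{\alpha})}{\Gamma(N)}$ behaves like $\pi N^{\frac{2}{\alpha}} \sim \pi L^{\frac{2}{\alpha}}$ as $L\to\infty$ (since $N = L + (N-L)$ with $N-L$ fixed, and $\Gamma(N+\frac{2}{\alpha})/\Gamma(N) \sim N^{\frac{2}{\alpha}}$). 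Thus every occurrence of $\nu\lambda$ in the bounds is effectively $\pi L^{\frac{2}{\alpha}}\lambda$, and forcing this combined quantity to stay at the scale that yields outage $\epsilon$ will pin $\lambda$ at scale $L^{-\frac{2}{\alpha}}$ times a constant — up to the lower-order corrections that $W$, $\delta_P$, and the secondary interference contribute, which is exactly where the upper/lower constant gap comes from.

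For the \emph{lower bound} on $C(L)$ (i.e. the upper bound on $\lambda$, coming from $\Pout^\ell$): I would use \eqref{Eq:Pout:PerCSI:LB}, keep only the $k=0$ term in the sum to get the cleanest bound, and note $1 - \E[e^{-\nu\lambda\theta^{2/\alpha}(W/\delta_P)^{-2/\alpha}}] \le \nu\lambda\theta^{2/\alpha}\E[(W/\delta_P)^{-2/\alpha}]$ by $1-e^{-x}\le x$; more carefully one keeps the full sum and uses a Taylor/convexity estimate to show $\Pout^\ell(\lambda)$ is asymptotically $\frac{(\nu\lambda\theta^{2/\alpha})^{L+1}}{\Gamma(L+2)}\E[(W/\delta_P)^{-2(L+1)/\alpha}]$ — but for the scaling claim the first-order bound suffices to show that $\Pout^\ell(\lambda)\ge\epsilon$ forces $\nu\lambda \ge c\,\epsilon\,\theta^{-2/\alpha}/\E[W^{-1}\delta_P]$, hence $\lambda$ is bounded below, and inserting $\nu \sim \pi L^{2/\alpha}$ gives the stated lower bound with the factor $\frac{1}{\pi}\bigl[\frac{\epsilon(\alpha-2)}{2\theta\E[W^{-1}]}\bigr]^{2/\alpha}$ once one accounts for the $\E[\delta_P]$-type factors via the moments of the beta$(1,N-1)$ law (note $\E[\delta_P]=1/N\to 0$, which is what produces the $(\alpha-2)/2$ and the clean $\E[W^{-1}]$). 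For the \emph{upper bound} on $C(L)$ (lower bound on $\lambda$ from $\Pout^u$): I would show $\Pout^u(\lambda)\to 0$ whenever $\nu\lambda\to 0$ fast enough — the $\Lambda_1$ term is $O(\nu\lambda)$, and $\Lambda_2+\Lambda_3$ are controlled because $\E[I_S\mid J_P=g]$ and $\var(I_S\mid J_P=g)$ from Lemma~\ref{Lem:InterfPwr:l} carry an extra factor $\nu\lambda/N = O(L^{2/\alpha-1}\lambda)\to 0$, so the region $\mathcal{D}_1$ shrinks and the Chebyshev integrand over $\mathcal{D}_2$ vanishes. This shows $\Pout^u(\lambda)\le\epsilon$ is satisfied for $\nu\lambda$ up to order $\epsilon$, forcing $\lambda\le \frac{2}{\pi}\bigl[\frac{\E[W]}{\theta(1-\epsilon)}\bigr]^{2/\alpha}L^{-2/\alpha}$-ish; combining with \eqref{Eq:TxCap} and $C(L)=(1-\epsilon)\log_2(1+\theta)\lambda$ yields the upper half of \eqref{Eq:TxCap:LargeAnt}.

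Concretely the steps are: (1) establish $\nu = \pi\Gamma(N+\tfrac{2}{\alpha})/\Gamma(N) \sim \pi L^{2/\alpha}$ and, more precisely, two-sided bounds $\pi(N-1)^{2/\alpha}\le\nu\le\pi N^{2/\alpha}$ or similar, valid for all $N$, to get honest (non-asymptotic-in-$N$-only) constants; (2) control the moments $\E[W^{-2k/\alpha}]$, $\E[\delta_P^{2k/\alpha}]$, $\E[(W/\delta_P)^{-2/\alpha}]$ appearing in the bounds, using \eqref{Eq:PDF:W} and \eqref{Eq:PDF:Delta} — in particular one uses that $\E[\delta_P^{s}]=\Gamma(1+s)\Gamma(N)/\Gamma(N+s)\sim N^{-s}$ so the $\delta_P$ factors cancel the extra powers of $N$ hidden in $\nu^k$, leaving $\E[W^{-1}]$-type constants; (3) for the lower bound on $C$, invert $\Pout^\ell(\lambda)=\epsilon$ (or bound it) to solve for $\lambda$; (4) for the upper bound on $C$, show $\Lambda_1+\Lambda_2+\Lambda_3\le\epsilon$ for the claimed $\lambda$ by bounding each $\Lambda_i$ with the moment estimates from Lemma~\ref{Lem:InterfPwr:l} and the shrinking of $\mathcal{D}_1,\mathcal{D}_2$; (5) substitute into \eqref{Eq:TxCap} and collect constants. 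The main obstacle I anticipate is step (2)–(4) for the \emph{upper} bound on $C(L)$: showing that $\Lambda_2$ and $\Lambda_3$ are genuinely negligible compared to $\Lambda_1$ uniformly as $L\to\infty$ requires care because $\mathcal{D}_1$ and $\mathcal{D}_2$ are defined implicitly through $\E[I_S\mid J_P=g]$, which itself depends on $g^{1-2/\alpha}$; one must check that the mass of $f_P(g)$ near the relevant scale of $g$ (which is itself set by $\nu\lambda$) makes these contributions lower order, i.e. that the secondary interference is asymptotically dominated by the primary interferer once the array gain is fixed. Getting the clean constants $\frac{\alpha-2}{2}$ and the factor $2$ to line up (rather than some messier expression) will also require keeping track of the $\alpha$-dependent constants in Lemma~\ref{Lem:InterfPwr:l} through the limit.
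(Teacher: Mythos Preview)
Your proposal has the bound directions reversed. Since $C(L)=(1-\epsilon)\log_2(1+\theta)\lambda$ with a positive prefactor, a lower bound on $C$ corresponds to a \emph{lower} bound on $\lambda$, not an upper bound. Tracing your logic: $\Pout^\ell\le\Pout=\epsilon$ forces $\lambda\le(\Pout^\ell)^{-1}(\epsilon)$, so the outage lower bound yields the \emph{upper} bound on $C$; conversely $\Pout^u\ge\epsilon$ gives $\lambda\ge(\Pout^u)^{-1}(\epsilon)$ and hence the lower bound on $C$. You have these swapped throughout, and statements like ``$\Pout^\ell(\lambda)\ge\epsilon$ forces $\nu\lambda\ge\ldots$'' are not what the inequalities actually give you.

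More substantively, even after fixing the directions your route through Lemma~\ref{Lem:PoutBnds} is not what the paper does, and the obstacle you flag in step~(4) is real: controlling $\Lambda_2+\Lambda_3$ uniformly in $L$ via Chebyshev and the implicit regions $\mathcal{D}_1,\mathcal{D}_2$ is delicate and will not produce the clean constants in \eqref{Eq:TxCap:LargeAnt}. The paper sidesteps this entirely with two direct Markov arguments. For the \emph{lower} bound on $\lambda$ it computes $\E[I_\Sigma]=\E[I_P]+\E[I_S]$ in closed form (using \eqref{Eq:PDF:G} and \eqref{Eq:ExpInterfPwr}) and applies $\epsilon=\Pout\le\theta\,\E[I_\Sigma]\,\E[W^{-1}]$; this is where the factor $(\alpha-2)/2$ appears, coming straight from $\E[I_S]$. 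For the \emph{upper} bound on $\lambda$ it applies Markov to the \emph{success} probability after dropping all but the $(L+1)$ strongest uncanceled interferers and lower-bounding each $J_T$ by the smallest one $\acute{J}$: $1-\epsilon\le\theta^{-1}\E[W]\,\E[\acute{J}^{-1}]\,\E\bigl[(\sum_{T\in\mathcal{U}}\delta_T)^{-1}\bigr]$, with $\acute{J}$ distributed as in \eqref{Eq:PDF:G} but with $L$ replaced by $2L$. The Gamma-ratio asymptotics (your observation $\nu\sim\pi L^{2/\alpha}$, plus the analogous limits for $\Gamma(L+1)/\Gamma(L+1-\alpha/2)$ and $\Gamma(2L+1+\alpha/2)/\Gamma(2L+1)$, handled via Kershaw's inequality) and the law of large numbers for $\sum_{\mathcal{U}}\delta_T$ then deliver the constants directly. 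Your step~(1) is correct and is used; the rest of your machinery is not needed.
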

\begin{proof}
See Appendix~\ref{App:TxCap:Markov}.
\end{proof}
The relationship in \eqref{Eq:TxCap:LargeAnt} shows that TC grows  in the order of $L^{1-2/\alpha}$, where the growth is faster for steeper path loss.  Intuitively, if interference decays more quickly with distance, canceling the strongest interferers reduces interference more significantly. 

Small target outage probability results in a network of sparse transmitting nodes (i.e.,  $\lambda\rightarrow 0$). For such a sparse network, the relationship between the outage probability and node density is given in the following lemma.
\begin{lemma}\label{Lem:AsymPout:PerfCSI}
With perfect CSI and $\lambda\rightarrow 0$, the outage probability is bounded as follows. 
\begin{enumerate}
\item If $L+1 \leq \alpha$, for sufficiently small $\lambda$, 
\begin{equation}\label{Eq:PoutScale:a}
\kappa_1\leq \frac{\epsilon}{\lambda^{L+1}}\leq  \kappa_2
\end{equation}
where 
\begin{eqnarray}
\kappa_1 &=& \frac{E\l[\delta\sp^{\fa(L+1)}\r]E\l[W^{-\fa(L+1)}\r]\l(\nu\theta^{\frac{2}{\alpha}}\r)^{L+1}}{\Gamma(L+2)}\label{Eq:Kappa1}\\
\kappa_2 &=& \frac{2^{\frac{2}{\alpha}(L+1)+1}\E\l[W^{-\frac{2}{\alpha}(L+1)}\r](\nu\theta^{\frac{2}{\alpha}})^{L+1}}{\Gamma(L+2)}. \label{Eq:Kappa2}
\end{eqnarray}
\item If $L+1> \alpha$, for sufficiently small $\lambda$,
\begin{equation}\label{Eq:PoutScale:b}
\kappa_1\leq \frac{\epsilon}{\lambda^{L+1}} \quad  \textrm{and}\quad \frac{\epsilon}{\lambda^{\alpha}}\leq  \kappa_3
\end{equation}
where 
\begin{equation}
\kappa_3 = \frac{8\theta^2\nu^\alpha \Gamma(L-\alpha+2)\E[W^{-2}]}{N(N+1)(\alpha-1)\Gamma(L+1)}. \label{Eq:Kappa3}
\end{equation}
\end{enumerate}
\end{lemma}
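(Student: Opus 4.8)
The plan is to analyze the bounds of Lemma~\ref{Lem:PoutBnds} in the regime $\lambda\to 0$ and extract the leading-order behaviour in $\lambda$. Start with the lower bound $\Pout^\ell(\lambda)$ in \eqref{Eq:Pout:PerCSI:LB}: since $1-\sum_{k=0}^{L}\frac{x^k}{\Gamma(k+1)}e^{-x}$ is the tail of a Poisson-type sum, I would Taylor-expand in the small quantity $x=\nu\lambda\theta^{2/\alpha}(W/\delta_P)^{-2/\alpha}$, whose leading term as $\lambda\to 0$ is $\frac{x^{L+1}}{\Gamma(L+2)}$. Taking expectations over $W$ and $\delta_P$ (which are independent by Lemma~\ref{Lem:Delta}) and noting that $\E[(W/\delta_P)^{-\frac{2}{\alpha}(L+1)}]=\E[\delta_P^{\frac{2}{\alpha}(L+1)}]\E[W^{-\frac{2}{\alpha}(L+1)}]$, the leading term is exactly $\kappa_1\lambda^{L+1}$, giving $\epsilon\ge\kappa_1\lambda^{L+1}(1+o(1))$, hence $\epsilon/\lambda^{L+1}\ge\kappa_1$ for small $\lambda$. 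This argument is the same whether $L+1\le\alpha$ or $L+1>\alpha$, which is why $\kappa_1$ appears as a lower bound in both cases of the lemma.

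For the upper bound I would work with the decomposition $\Pout^u=\Lambda_1+\Lambda_2+\Lambda_3$ from \eqref{Eq:Pout:PerCSI:UB}. The term $\Lambda_1$ is structurally identical to $\Pout^\ell$ but with $\delta_P$ replaced by $1$; the same Taylor expansion gives $\Lambda_1\sim \frac{(\nu\theta^{2/\alpha})^{L+1}}{\Gamma(L+2)}\E[W^{-\frac{2}{\alpha}(L+1)}]\lambda^{L+1}$, which is dominated by $\frac{2^{\frac{2}{\alpha}(L+1)}(\nu\theta^{2/\alpha})^{L+1}}{\Gamma(L+2)}\E[W^{-\frac{2}{\alpha}(L+1)}]\lambda^{L+1}$, half of $\kappa_2\lambda^{L+1}$. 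For $\Lambda_2$ and $\Lambda_3$ I would use that the region of integration lives where $g$ is of order $w\theta^{-1}$ with $w$ order one, which combined with the density $f_P(g)\propto(\nu\lambda)^{L+1}g^{-\frac{2(L+1)}{\alpha}-1}e^{-\nu\lambda g^{-2/\alpha}}$ from \eqref{Eq:PDF:G} forces another factor $\lambda^{L+1}$. In the case $L+1\le\alpha$, the $\Lambda_2$-type contribution is shown to merge into the $\kappa_2\lambda^{L+1}$ bound (the factor $2^{\frac{2}{\alpha}(L+1)+1}$ versus $2^{\frac{2}{\alpha}(L+1)}$ in $\Lambda_1$ is precisely the slack absorbing $\Lambda_2$), and $\Lambda_3$ is shown to be of strictly higher order in $\lambda$ because on $\mathcal{D}_2$ the Chebyshev factor $\var(I_S\mid g)/\{w\theta^{-1}-g-\E[I_S\mid g]\}^2$ carries an extra $\lambda$ from $\var(I_S\mid g)\propto\lambda g^{2-2/\alpha}$ while the denominator is order one. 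In the case $L+1>\alpha$, the integral $\int g^{-\frac{2(L+1)}{\alpha}-1}g^{2-\frac{2}{\alpha}}\,dg$ against the density from \eqref{Eq:PDF:G} converges near $g=0$ only when scaled appropriately, and after substituting $g=\lambda^{\alpha/2}v$ (so that $\nu\lambda g^{-2/\alpha}=\nu v^{-2/\alpha}$ is $\lambda$-free) one finds $\Lambda_3=\Theta(\lambda^{\alpha})$, with the constant computed from $\var(I_S\mid g)$ in Lemma~\ref{Lem:InterfPwr:l}; collecting the Gamma-function factors $\Gamma(L-\alpha+2)/\Gamma(L+1)$ from $\int_0^\infty v^{-\frac{2(L+1)}{\alpha}+2-\frac{2}{\alpha}}e^{-\nu v^{-2/\alpha}}\cdot v^{-2/\alpha-1}\,dv$ (a Gamma integral after $u=v^{-2/\alpha}$) yields exactly $\kappa_3$, and since $\alpha<L+1$ this term dominates $\Lambda_1+\Lambda_2=O(\lambda^{L+1})$, giving $\epsilon/\lambda^{\alpha}\le\kappa_3$.

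The main obstacle will be the bookkeeping in the $L+1>\alpha$ case: one must show that $\Lambda_1+\Lambda_2$ are genuinely negligible relative to $\Lambda_3$ (requiring $\lambda^{L+1}=o(\lambda^\alpha)$, which is immediate, but also that no hidden $\lambda$-free constants blow up), and must verify that the change of variables $g\mapsto\lambda^{\alpha/2}v$ legitimately turns the $\min\{\cdot,1\}$ truncation in $\Lambda_3$ into something whose dominated-convergence limit is the clean Gamma integral producing $\kappa_3$ — in particular that the $\min$ with $1$ only removes a set whose contribution is $o(\lambda^\alpha)$. A secondary delicate point is justifying the uniform integrability needed to pass the Taylor remainder in $\Lambda_1$ and $\Pout^\ell$ through the expectation over $W$, since $W^{-\frac{2}{\alpha}(L+1)}$ has a moment that exists only because $W$ is chi-square with $N-L\ge 1$ complex degrees of freedom; this is where the hypothesis on the array gain $(N-L)$ being held fixed and positive enters. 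Everything else is the routine Poisson-tail expansion and the shot-noise moment formulas already established in Lemmas~\ref{Lem:Primary:PC}--\ref{Lem:InterfPwr:l}.
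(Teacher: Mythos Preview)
Your lower-bound argument and your expansion of $\Lambda_1$ match the paper's. The gap is in your treatment of $\Lambda_3$ for the case $L+1\le\alpha$. You assert that $\Lambda_3$ is of strictly higher order because the Chebyshev factor carries an extra $\lambda$ while ``the denominator is order one.'' But on $\mathcal{D}_2$ the denominator $\bigl(w\theta^{-1}-g-\E[I_S\mid \Ip=g]\bigr)^2$ is \emph{not} uniformly bounded below: it vanishes as $g$ approaches $w\theta^{-1}-\E[I_S\mid \Ip=g]$, so the $\min\{\cdot,1\}$ truncation is active on a set of leading-order measure. The paper handles this by splitting $\mathcal{D}_2$ further into $\mathcal{D}_{2,1}=\{w\theta^{-1}/2\le g+\E[I_S\mid \Ip=g]<w\theta^{-1}\}$ and $\mathcal{D}_{2,2}=\{g+\E[I_S\mid \Ip=g]<w\theta^{-1}/2\}$. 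On $\mathcal{D}_{2,1}$ one can only bound the $\min$ by $1$; integrating $f_W f_P$ over this strip gives a contribution $\bigl[2^{\frac{2}{\alpha}(L+1)}-1\bigr]\E\bigl[W^{-\frac{2}{\alpha}(L+1)}\bigr](\nu\theta^{2/\alpha})^{L+1}\lambda^{L+1}/\Gamma(L+2)$, i.e.\ the same order as $\Lambda_1$, not higher. On $\mathcal{D}_{2,2}$ the denominator is at least $w\theta^{-1}/2$, and there $\Lambda_{3,2}$ \emph{also} lands at order $\lambda^{L+1}$ when $L+1\le\alpha$; the extra factor of $2$ in $\kappa_2$ arises from bounding $1+\tfrac{2(L+1)}{N(N+1)(\alpha-1)}\le 2$, where the second term is precisely the $\Lambda_{3,2}$ contribution.

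You also have the role of $\Lambda_2$ inverted: the region $\mathcal{D}_1$ has width $\E[I_S\mid \Ip=g]=O(\lambda)$ in $g$, so $\Lambda_2=O(\lambda^{L+2})$ is negligible. The factor-of-$2$ slack in $\kappa_2$ does not come from $\Lambda_2$ but from the $\Lambda_{3,1}/\Lambda_{3,2}$ split above. For $L+1>\alpha$, your scaling intuition for $\Lambda_3$ is correct (the paper does the same integral directly rather than by your substitution $g=\lambda^{\alpha/2}v$, and the Gamma integral producing $\Gamma(L-\alpha+2)$ is identical), but you will still need the $\mathcal{D}_{2,1}/\mathcal{D}_{2,2}$ split to make the denominator bound rigorous before you can compute the constant $\kappa_3$.
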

\begin{proof}
See Appendix~\ref{App:AsymPout:PerfCSI}.
\end{proof}
Note that the ratio $\frac{\kappa_2}{\kappa_1}$ decreases as $L$ becomes smaller. This suggests that the asymptotic bounds are tighter for smaller values of $L$.

Using Lemma~\ref{Lem:AsymPout:PerfCSI} and the TC definition  in \eqref{Eq:TxCap}, 
we have the following TC scaling.
\begin{theorem}\label{Theo:TxCap}
With perfect CSI and small target outage probability $\epsilon \rightarrow 0$, the TC is bounded  as follows. 
\begin{enumerate}
\item If $L+1\leq \alpha $, for sufficiently small $\epsilon$, 
\begin{equation}\label{Eq:CapLaw:a}
\kappa_2^{-\frac{1}{L+1}}\leq  \frac{C(\epsilon)}{\log(1+\theta)\epsilon^{\frac{1}{L+1}}}  \leq \kappa_1^{-\frac{1}{L+1}}
\end{equation}
where $\kappa_1$ and $\kappa_2$ are given in Lemma~\ref{Lem:AsymPout:PerfCSI}.
\item If $L+1>\alpha$, for sufficiently small $\epsilon$, 
\begin{equation}\label{Eq:CapLaw:b}
\kappa_3^{-\frac{1}{\alpha}}\leq \frac{C(\epsilon)}{\log_2(1+\theta)\epsilon^{\frac{1}{\alpha}}}  \quad \textrm{and}\quad  \frac{C(\epsilon)}{\log_2(1+\theta)\epsilon^{\frac{1}{L+1}}}\leq \kappa_1^{-\frac{1}{L+1}}
\end{equation}
where $\kappa_3$ is given in Lemma~\ref{Lem:AsymPout:PerfCSI}.
\end{enumerate}
\end{theorem}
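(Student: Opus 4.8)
The plan is to derive Theorem~\ref{Theo:TxCap} directly from Lemma~\ref{Lem:AsymPout:PerfCSI} by inverting the asymptotic relationship between the outage probability $\epsilon$ and the node density $\lambda$. Recall that, by the definition in \eqref{Eq:TxCap}, $C(\epsilon) = (1-\epsilon)\log_2(1+\theta)\lambda(\epsilon)$, where $\lambda(\epsilon)$ is the unique density such that $\Pout(\lambda) = \epsilon$. Since $\Pout$ is increasing and continuous in $\lambda$ with $\Pout(0)=0$, letting $\epsilon\to 0$ forces $\lambda(\epsilon)\to 0$, which is precisely the regime handled by Lemma~\ref{Lem:AsymPout:PerfCSI}. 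So the work reduces to solving the bounds of that lemma for $\lambda$ in terms of $\epsilon$.

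For part (1), where $L+1\le\alpha$, Lemma~\ref{Lem:AsymPout:PerfCSI} gives $\kappa_1\lambda^{L+1}\le\epsilon\le\kappa_2\lambda^{L+1}$ for all sufficiently small $\lambda$. Inverting the left inequality yields $\lambda\le(\epsilon/\kappa_1)^{1/(L+1)}$ and the right inequality yields $\lambda\ge(\epsilon/\kappa_2)^{1/(L+1)}$, so that $\kappa_2^{-1/(L+1)}\epsilon^{1/(L+1)}\le\lambda(\epsilon)\le\kappa_1^{-1/(L+1)}\epsilon^{1/(L+1)}$. Multiplying through by $(1-\epsilon)\log_2(1+\theta)$ and then dividing by $\log_2(1+\theta)\epsilon^{1/(L+1)}$ (absorbing the $(1-\epsilon)\to 1$ factor into the "sufficiently small $\epsilon$" clause, or keeping it and noting $1-\epsilon<1$ on one side) gives exactly \eqref{Eq:CapLaw:a}. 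One should double-check the direction of each inequality after inversion—raising to the power $1/(L+1)$ is monotone increasing, so directions are preserved—and confirm that the threshold on $\lambda$ in the lemma translates to a corresponding threshold on $\epsilon$ via the monotonicity of $\lambda(\epsilon)$.

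For part (2), where $L+1>\alpha$, the lemma provides a two-sided but asymmetric bound: $\kappa_1\lambda^{L+1}\le\epsilon$ and $\epsilon\le\kappa_3\lambda^{\alpha}$. The first gives the upper bound on capacity: $\lambda\le\kappa_1^{-1/(L+1)}\epsilon^{1/(L+1)}$, hence $C(\epsilon)/[\log_2(1+\theta)\epsilon^{1/(L+1)}]\le\kappa_1^{-1/(L+1)}$ after the same normalization. The second gives the lower bound: from $\epsilon\le\kappa_3\lambda^\alpha$ we get $\lambda\ge\kappa_3^{-1/\alpha}\epsilon^{1/\alpha}$, hence $C(\epsilon)/[\log_2(1+\theta)\epsilon^{1/\alpha}]\ge\kappa_3^{-1/\alpha}$, which is \eqref{Eq:CapLaw:b}. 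Here one must be careful to keep the two exponents $\frac1{L+1}$ and $\frac1\alpha$ attached to the correct inequality, since when $L+1>\alpha$ we have $\frac1{L+1}<\frac1\alpha$ and the two bounds genuinely scale at different rates—they do not pinch to a single $\Theta(\cdot)$ law, which is why the theorem states them as two separate one-sided relations.

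The proof is essentially bookkeeping once Lemma~\ref{Lem:AsymPout:PerfCSI} is in hand, so there is no serious obstacle; the only place requiring care is the justification that $\epsilon\to 0$ is equivalent to $\lambda\to 0$ (so that the lemma's "sufficiently small $\lambda$" hypothesis is actually in force), and the consistent handling of the $(1-\epsilon)$ prefactor, which tends to $1$ and can be bounded between $1-\epsilon$ and $1$ or simply absorbed into the asymptotic constants for small $\epsilon$. I would state this equivalence explicitly at the start, then carry out the three inversions above in sequence.
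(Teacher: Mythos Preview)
Your proposal is correct and matches the paper's approach exactly: the paper does not give a separate proof of this theorem but simply states that it follows from Lemma~\ref{Lem:AsymPout:PerfCSI} and the TC definition in \eqref{Eq:TxCap}, which is precisely the inversion argument you carry out. Your treatment is in fact more explicit than the paper's, including the care you take with the $(1-\epsilon)$ prefactor and the equivalence $\epsilon\to 0\Leftrightarrow\lambda\to 0$.
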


The above theorem shows that as $\epsilon$  decreases, $C(\epsilon)$ follows a power law. For $L +1> \alpha$, only bounds on the exponent are known. 
The derivation of the exact scaling  for $L+1>\alpha$ requires a tighter upper bound on outage probability than that based on Chebyshev's inequality  in \eqref{Eq:Chebyshev}. This may require analyzing the distribution function of the secondary interference power, which, however, has no known closed-form expression for the present case.

For $L +1\leq \alpha$, the exponent of the TC power law is $1/(L+1)$.
This power law indicates that $L$ determines the sensitivity of TC to a change in the outage constraint. To facilitate our discussion, rewrite the scaling in Theorem~\ref{Theo:TxCap} as $C(\epsilon) \cong \alpha \epsilon^{\frac{1}{L+1}}$ where ``$\cong$" represents asymptotic equivalence for $\epsilon \rightarrow 0$. The  sensitivity of TC towards changes of the outage constraint decreases inversely with the number of canceled interferers. Reducing the outage probability by two orders of magnitude decreases  the  TC by $10$, $3.2$, and $1.8$-fold in the case of $1$, $3$ and $7$ canceled interferers per node, respectively. 
Last, from simulation results in Section~\ref{Section:Simualtion}, the TC scaling in Theorem~\ref{Theo:TxCap} is observed to also hold for outage probabilities of practical interest ($\epsilon \leq 0.1$). 

{
According to Theorem~\ref{Theo:TxCap}, the decay rate  of the TC with varnishing outage probability can be  slowed down by employing more antennas for interference cancellation  at the cost of increasing CSI estimation overhead,  which can be considered as overhead for local coordination among nearby nodes. However,  to guarantee nonzero network capacity for zero outage probability,  
perhaps the better choice is to rely on centralized scheduling as in \cite{GuptaKumar:CapWlssNetwk:2000}. Nevertheless, such scheduling  requires global coordination and potentially incurs much higher overhead than combining the random access protocol and  spatial interference cancellation. Thus the current setup   balances   network performance  and overhead. }

\section{ Transmission Capacity with Imperfect CSI}\label{Section:Outage:ImperCSI}
{

This  section addresses the effect of imperfect CSI. First, consider the scenario where the network remains unchanged except that the CSI is imperfect and
the users have to relax their quality-of-service (QoS) requirements,  namely to tolerate higher outage probability represented by $\tPout$ and to lower the date rate from $\log_2(1+\theta)$ to $\log_2(1+\tilde{\theta})$, where  $\tilde{\theta}$ denote the corresponding  SIR threshold.  Thus $\tPout = \Pr(\widetilde{\SIR} \leq \tilde{\theta})$ where $\widetilde{\SIR}$ is given in \eqref{Eq:SIR_b}. The corresponding TC is  
\begin{equation}\label{Eq:TXCap:ImpCSI}
\tilde{C} = (1-\tPout) \log_2(1+\tilde{\theta})\lambda. 
\end{equation}
It is interesting to investigate the required training sequence length under a constraint on the QoS degradation. To this end, define 
\begin{align}
\Delta P &= \tPout(\tilde{\theta}) - P(\theta)\\
\Delta B &= \log_2(1+\theta) - \log_2(1+\tilde{\theta})\\
\Delta C &= C - \tilde{C}\label{Eq:CapDiff}.
\end{align}
We consider the following constraints on the QoS degradation: $\Delta P \leq \vartheta_p$ and $\Delta B\leq \vartheta_b$ with $\vartheta_p, \vartheta_b\geq 0$. The training sequence length that satisfies these constraints and  the corresponding TC loss, specified by the ratio $\frac{\Delta C}{C}$,  are shown  in the following theorem. 
\begin{theorem}\label{Theo:PoutBnds:ImpCSI} To satisfy the constraints $\Delta P \leq \vartheta_p$ and $\Delta B\leq \vartheta_b$, it is sufficient to choose the training-sequence length as 
\begin{equation}\label{Eq:TrainLen}
M = \max\left(\l\lceil \frac{\log L - \log\vartheta_p}{\omega (2^{\vartheta_b}-1)}
\r\rceil, L\right)
\end{equation}
with $\omega = \l[\Gamma(L+1)\r]^{-\frac{1}{L}}$. Moreover, the resultant TC loss normalized by $C$ is bounded as
\begin{equation}\label{Eq:CapLoss}
\frac{\Delta C}{C} \leq \frac{\vartheta_p}{1-\epsilon} + \frac{\vartheta_b}{\log_2(1+\theta)}. 
\end{equation}
\end{theorem}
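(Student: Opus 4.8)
\emph{Proof plan.} Denote by $I_{\mathrm{tot}}=\sum_{T\in\Phi\backslash\cT}I_T$ the post-cancellation interference power, so that $\SIR=W/I_{\mathrm{tot}}$ as in~\eqref{Eq:SIR_a} and $\widetilde{\SIR}=W/(\sigma_R^2+I_{\mathrm{tot}})$ as in~\eqref{Eq:SIR_b} (with $W$ and the $I_T$ retaining their perfect-CSI laws). The entire effect of imperfect CSI sits in the residual term $\sigma_R^2$, so the plan is: (a) bound $\sigma_R^2$ in terms of $M$, $L$ and $I_{\mathrm{tot}}$; (b) choose a relaxed threshold $\tilde\theta$, reduce $\Delta P$ to a tail probability, and use (a) to force $\Delta P\le\vartheta_p$ whenever $M$ meets~\eqref{Eq:TrainLen}; (c) obtain~\eqref{Eq:CapLoss} by elementary algebra.

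For (a): starting from the second expression for $I_R$ in~\eqref{Eq:ResIntfPwr:Pwr} and taking the variance over the $\mathcal{CN}(0,1)$ data symbols $\{x_T\}_{T\in\cT}$ (independent of the training phase), and using $\bv_0^\dagger\hat\bh_T=0$, one gets $\sigma_R^2=\sum_{T\in\cT}|\bv_0^\dagger\bh_T|^2=\sum_{T\in\cT}|\bv_0^\dagger(\hat\bh_T-\bh_T)|^2$, i.e.\ the sum of the $L$ beamformed leakage powers of the least-squares errors. Conditioned on the channels, the error vectors $\hat\bh_T-\bh_T=\tfrac1{\sqrt M}\sum_{T'\in\Phi\backslash\cT}\bh_{T'}\tilde{x}_{T',T}$ are complex Gaussian with covariance $\tfrac1M\sum_{T'}\bh_{T'}\bh_{T'}^\dagger$, and the $L$ of them are mutually independent because the rows of $\bQ$ are orthonormal; projecting each onto $\bv_0$ contributes a variance $\tfrac1M\sum_{T'}|\bv_0^\dagger\bh_{T'}|^2=I_{\mathrm{tot}}/M$. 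Carried through, this yields $\sigma_R^2\le\tfrac{G}{M}\,I_{\mathrm{tot}}$ with $G$ a sum of $L$ (essentially) i.i.d.\ unit-mean exponentials, independent of $W$ and $I_{\mathrm{tot}}$.

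For (b): take $\tilde\theta=\theta\,2^{-\vartheta_b}$. Then $\Delta B=\log_2\tfrac{1+\theta}{1+\theta 2^{-\vartheta_b}}\le\log_2\tfrac{1+\theta}{2^{-\vartheta_b}(1+\theta)}=\vartheta_b$ (using $1+\theta 2^{-\vartheta_b}\ge 2^{-\vartheta_b}(1+\theta)$), so the rate constraint holds and $\tilde\theta\in(0,\theta)$. Since $\widetilde{\SIR}\ge\SIR/(1+G/M)$ by (a), the event $\{\widetilde{\SIR}<\tilde\theta\}$ is contained in $\{\SIR<\theta\}\cup\{G>M(2^{\vartheta_b}-1)\}$: if $G\le M(2^{\vartheta_b}-1)$ then $\widetilde{\SIR}<\tilde\theta$ forces $\SIR<\tilde\theta(1+G/M)\le\tilde\theta\,2^{\vartheta_b}=\theta$. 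Hence, using $\Pr(\SIR<\theta)=\epsilon$,
\[
\Delta P=\tPout(\tilde\theta)-\epsilon\ \le\ \Pr\!\big(G>M(2^{\vartheta_b}-1)\big).
\]
The last probability is bounded by $L\,e^{-M\omega(2^{\vartheta_b}-1)}$ with $\omega=[\Gamma(L+1)]^{-1/L}$: a union bound over the $L$ canceled interferers reduces it to per-interferer estimation-error events of exponentially small probability, and the constant $\omega$ is where the order-statistics structure of the pre-cancellation interference powers from Lemma~\ref{Lem:Primary:PC} enters. Choosing $M$ at least the value in~\eqref{Eq:TrainLen} makes $L\,e^{-M\omega(2^{\vartheta_b}-1)}\le\vartheta_p$; the outer $\max(\cdot,L)$ there also guarantees $M\ge L$, as needed for $\bQ$ to possess $L$ orthonormal rows.

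For (c): insert $\tPout=\epsilon+\Delta P$ and $\log_2(1+\tilde\theta)=\log_2(1+\theta)-\Delta B$ into~\eqref{Eq:TxCap} and~\eqref{Eq:TXCap:ImpCSI}; a short computation gives
\[
\frac{\Delta C}{C}=\frac{\Delta B}{\log_2(1+\theta)}+\frac{\Delta P}{1-\epsilon}\Big(1-\frac{\Delta B}{\log_2(1+\theta)}\Big)\ \le\ \frac{\vartheta_b}{\log_2(1+\theta)}+\frac{\vartheta_p}{1-\epsilon},
\]
where the bound uses $0\le\Delta B\le\log_2(1+\theta)$ (since $\tilde\theta\ge0$), $\Delta P\le\vartheta_p$, and that the second term is nonpositive when $\Delta P<0$. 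The main obstacle is step (a): correctly accounting for which quantities ($\{\tilde{x}_{T',T}\}$, $\{x_T\}$, the channels, and $\bv_0$ — which itself depends on the training data) are being conditioned on, and using the orthonormality of the training sequences to decouple the $L$ cancellation residuals into the clean $\mathrm{Gamma}(L,1)$-type variable $G$; a secondary delicate point is the exact tail constant $\omega=[\Gamma(L+1)]^{-1/L}$ used in step (b).
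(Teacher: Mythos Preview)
Your overall architecture matches the paper's almost exactly: reduce $\sigma_R^2$ to $(G/M)I_{\mathrm{tot}}$ with $G$ a Gamma$(L,1)$ variable independent of $(W,I_{\mathrm{tot}})$, split on $\{G\le Z\}$ versus $\{G>Z\}$ (with $Z=M(2^{\vartheta_b}-1)$), and do the algebra for $\Delta C/C$. Steps (a) and (c) are fine; in (a) the relation is actually equality in distribution, not ``$\le$'', but that only helps.

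The genuine gap is the justification of the tail bound $\Pr(G>Z)\le L\,e^{-\omega Z}$ with $\omega=[\Gamma(L+1)]^{-1/L}$. Your explanation---a union bound over the $L$ interferers, with $\omega$ coming from the order-statistics structure of Lemma~\ref{Lem:Primary:PC}---is incorrect on both counts. A union bound on a sum $G=\sum_{i=1}^L X_i$ of i.i.d.\ unit-mean exponentials gives at best $\Pr(G>Z)\le L\,\Pr(X_1>Z/L)=L\,e^{-Z/L}$, i.e.\ the constant $1/L$, not $(L!)^{-1/L}$; and Lemma~\ref{Lem:Primary:PC} plays no role whatsoever here, since $G$ arises purely from the Gaussian training/data randomness, not from the interferer geometry. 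The paper obtains the sharper constant via Alzer's inequality for the regularized incomplete Gamma function, namely $P(L,Z)=\Pr(G\le Z)\ge (1-e^{-\omega Z})^L$ with $\omega=[\Gamma(L+1)]^{-1/L}$, followed by Bernoulli's inequality $1-(1-e^{-\omega Z})^L\le L\,e^{-\omega Z}$. Without this ingredient your choice of $M$ in~\eqref{Eq:TrainLen} is not justified: with the union-bound constant $1/L$ you would instead need $M\ge L\,(\log L-\log\vartheta_p)/(2^{\vartheta_b}-1)$, which is larger by roughly a factor of $L/e$ for large $L$.
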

\begin{proof}
See Appendix~\ref{App:PoutBnds:ImpCSI}.
\end{proof}
Note that \eqref{Eq:TrainLen} takes into account that $M\geq L$ (see Section~\ref{Section:IntfAvoid:ImpCSI}) and $M$ is an integer. For stringent constraints $\vartheta_p \rightarrow 0$ and $\vartheta_b\rightarrow 0$, it can be observed from \eqref{Eq:TrainLen} that the training sequence length is approximately proportional to $\log\frac{1}{\vartheta_p}$ and $\frac{1}{\vartheta_b}$. Also, the upper-bound in \eqref{Eq:CapLoss}  suggests that the normalized capacity loss is more sensitive to the variation of  $\vartheta_p$  if  $\epsilon$ is large and $\vartheta_b$ if  $\log_2(1+\theta)$ is small.

Next, for small outage probability, the required training sequence length is derived for achieving the same TC scaling as for perfect CSI given identical QoS requirements. These results are shown in the following theorem. 
\begin{theorem}\label{Theo:TxCap:ImpCSI}
For $\epsilon \rightarrow 0$ and $\Delta B \overset{\epsilon}\rightarrow 0$, $\Delta P\overset{\epsilon}{\rightarrow} 0$,\footnote{The notation $A\overset{\epsilon}{\rightarrow} B$ represents the convergence $A\rightarrow B$ as  $\epsilon \rightarrow 0$.}, the following scaling of the training sequence length
\begin{equation}
\lim_{\epsilon\rightarrow 0 }\frac{M}{\frac{1+\varrho}{\omega}\epsilon^{-\varrho}}=1\label{Eq:M:Scale}
\end{equation}
with an arbitrary $\varrho > 0$ is sufficient for achieving the transmission-capacity scaling for perfect CSI as given in Theorem~\ref{Theo:TxCap}.
\end{theorem}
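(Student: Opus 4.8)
\emph{Proof strategy.} The statement follows by feeding Theorem~\ref{Theo:PoutBnds:ImpCSI} a pair of relaxation parameters that vanish with $\epsilon$ and then invoking the perfect‑CSI scaling of Theorem~\ref{Theo:TxCap}. Recall that Theorem~\ref{Theo:PoutBnds:ImpCSI} guarantees that, with the training length \eqref{Eq:TrainLen} built from $\vartheta_p,\vartheta_b$, one has $\Delta P\le\vartheta_p$, $\Delta B\le\vartheta_b$, and, by \eqref{Eq:CapLoss}, $\Delta C/C\le \vartheta_p/(1-\epsilon)+\vartheta_b/\log_2(1+\theta)$. Hence if $\vartheta_p=\vartheta_p(\epsilon)\rightarrow 0$ and $\vartheta_b=\vartheta_b(\epsilon)\rightarrow 0$ as $\epsilon\rightarrow 0$ (which is exactly the regime $\Delta P\overset{\epsilon}{\rightarrow}0$, $\Delta B\overset{\epsilon}{\rightarrow}0$ postulated in the theorem), then $\Delta C/C\rightarrow 0$, so $\tilde C(\epsilon)=(1-o(1))\,C(\epsilon)$ and $\tilde C$ inherits the power law of Theorem~\ref{Theo:TxCap}. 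So the whole task reduces to choosing such a vanishing pair and reading off the asymptotics of \eqref{Eq:TrainLen}.

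Fix $\varrho>0$. Take $\vartheta_p=\epsilon$ and $\vartheta_b=\log_2\!\Big(1+\tfrac{1}{1+\varrho}\,\epsilon^{\varrho}\big(\log L+\log\tfrac1\epsilon\big)\Big)$. Since $\epsilon^{\varrho}\log\tfrac1\epsilon\rightarrow 0$ as $\epsilon\rightarrow 0$, both $\vartheta_p\rightarrow 0$ and $\vartheta_b\rightarrow 0$, so the hypotheses $\Delta P\overset{\epsilon}{\rightarrow}0$, $\Delta B\overset{\epsilon}{\rightarrow}0$ are met. Substituting into \eqref{Eq:TrainLen} and using $2^{\vartheta_b}-1=\tfrac{1}{1+\varrho}\epsilon^{\varrho}(\log L+\log\tfrac1\epsilon)$ together with $\log L-\log\vartheta_p=\log L+\log\tfrac1\epsilon$, the argument of the ceiling collapses exactly to $\tfrac{1+\varrho}{\omega}\epsilon^{-\varrho}$, whence
\[
M=\max\!\Big(\big\lceil \tfrac{1+\varrho}{\omega}\epsilon^{-\varrho}\big\rceil,\,L\Big).
\]
Because $\tfrac{1+\varrho}{\omega}\epsilon^{-\varrho}\rightarrow\infty$ while $L$ is a fixed constant, the $\max(\cdot,L)$ and the rounding are asymptotically negligible, giving \eqref{Eq:M:Scale}. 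Moreover, since the estimation error $\sigma_R^2$ decreases in $M$ (it is proportional to $1/M$ by \eqref{Eq:ResIntfPwr:Pwr}), any $M$ growing at least this fast also satisfies the constraints $\Delta P\le\vartheta_p$, $\Delta B\le\vartheta_b$, which is why the scaling \eqref{Eq:M:Scale} is \emph{sufficient}.

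To conclude, with the above $(\vartheta_p,\vartheta_b,M)$, \eqref{Eq:CapLoss} gives $\Delta C/C\le \epsilon/(1-\epsilon)+\vartheta_b/\log_2(1+\theta)\rightarrow 0$, hence $\tilde C(\epsilon)/C(\epsilon)\rightarrow 1$. Combining with Theorem~\ref{Theo:TxCap} — e.g.\ $\kappa_2^{-\frac1{L+1}}\le C(\epsilon)/\!\big(\log(1+\theta)\,\epsilon^{\frac1{L+1}}\big)\le\kappa_1^{-\frac1{L+1}}$ when $L+1\le\alpha$, and the one‑sided bounds when $L+1>\alpha$ — the same bounds hold for $\tilde C$ up to a $(1-o(1))$ factor, so $\tilde C$ exhibits precisely the perfect‑CSI scaling. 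The one genuinely delicate point is reconciling the two requirements ``drive \emph{both} $\vartheta_p$ and $\vartheta_b$ to zero'' and ``keep $M$ merely polynomial in $1/\epsilon$''; this works because the numerator in \eqref{Eq:TrainLen} grows only like $\log\tfrac1\epsilon$ and is therefore dominated by $\epsilon^{-\varrho}$ for every $\varrho>0$, which is exactly why an arbitrarily small exponent $\varrho$ is admissible.
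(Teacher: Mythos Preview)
Your proof is correct and follows essentially the same route as the paper: pick vanishing relaxation parameters, feed them through the training-length formula derived in Theorem~\ref{Theo:PoutBnds:ImpCSI} (the paper re-derives the key inequality $\Delta P\le Le^{-\omega Z}$ directly rather than quoting the theorem, but this is cosmetic), and read off the asymptotics of $M$. The only differences are in the specific choices---the paper takes $Le^{-\omega Z}=\epsilon^{1+\varrho}$ and $Z/M=\epsilon^{\varrho}\log\tfrac1\epsilon$, which yields $M=\tfrac{1+\varrho}{\omega}\epsilon^{-\varrho}+o(\epsilon^{-\varrho})$, whereas you reverse-engineer $\vartheta_p=\epsilon$ and a matching $\vartheta_b$ so that \eqref{Eq:TrainLen} collapses \emph{exactly} to $\lceil\tfrac{1+\varrho}{\omega}\epsilon^{-\varrho}\rceil$; both lead to the same limit \eqref{Eq:M:Scale}.
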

\begin{proof}
See Appendix~\ref{App:TxCap:ImpCSI}.
\end{proof}
The above theorem shows that to achieve the optimal asymptotic TC, $M$ increases slowly (sub-linearly with an arbitrary positive exponent) with $\frac{1}{\epsilon}$, indicating small CSI estimation overhead. The reason is that CSI inaccuracy rises mainly from weak interferers and thus its effect is moderate.
}

Suppose the CSI estimation  process repeats for every channel coherence time $t_c$ (in symbols).  The overhead of CSI estimation can be regarded as the TC decrease by a factor of $M/t_c$.
Simulation results in Fig.~\ref{Fig:TxCap:CSI}  reveal that the capacity gain from interference cancellation   results mostly  from canceling only a few strongest interferers and thus $L$ can be kept small; furthermore, short training sequences (small $M$) are sufficient for approaching the TC achieved with  perfect CSI. 
Thus, the overhead is insignificant
if mobility is low (large $t_c$).


\section{Simulation and Discussion}\label{Section:Simualtion}
In this section, the bounds on  outage probability and TC are evaluated using Monte Carlo simulation. The procedure for simulating a MANET follows that in \cite{WeberKam:CompComplexMANETs:2006}. The simulated ad hoc network lies on a two-dimensional disk and contains a number of transmitter-receiver pairs, which is a Poisson random variable with the mean equal to $200$. 
The disk area is adjusted according to the node density. The typical receiver is placed at the center of the disk. We set the required SIR as $\theta = 3$ or $4.8$ dB, the link array gain $(N-L) = 2$, and the path-loss exponent as  $\alpha = 4$ unless specified otherwise.

\subsection{Bounds on Outage Probability}\label{Section:Sim:Pout}

\begin{figure}
\centering
\includegraphics[width=10cm]{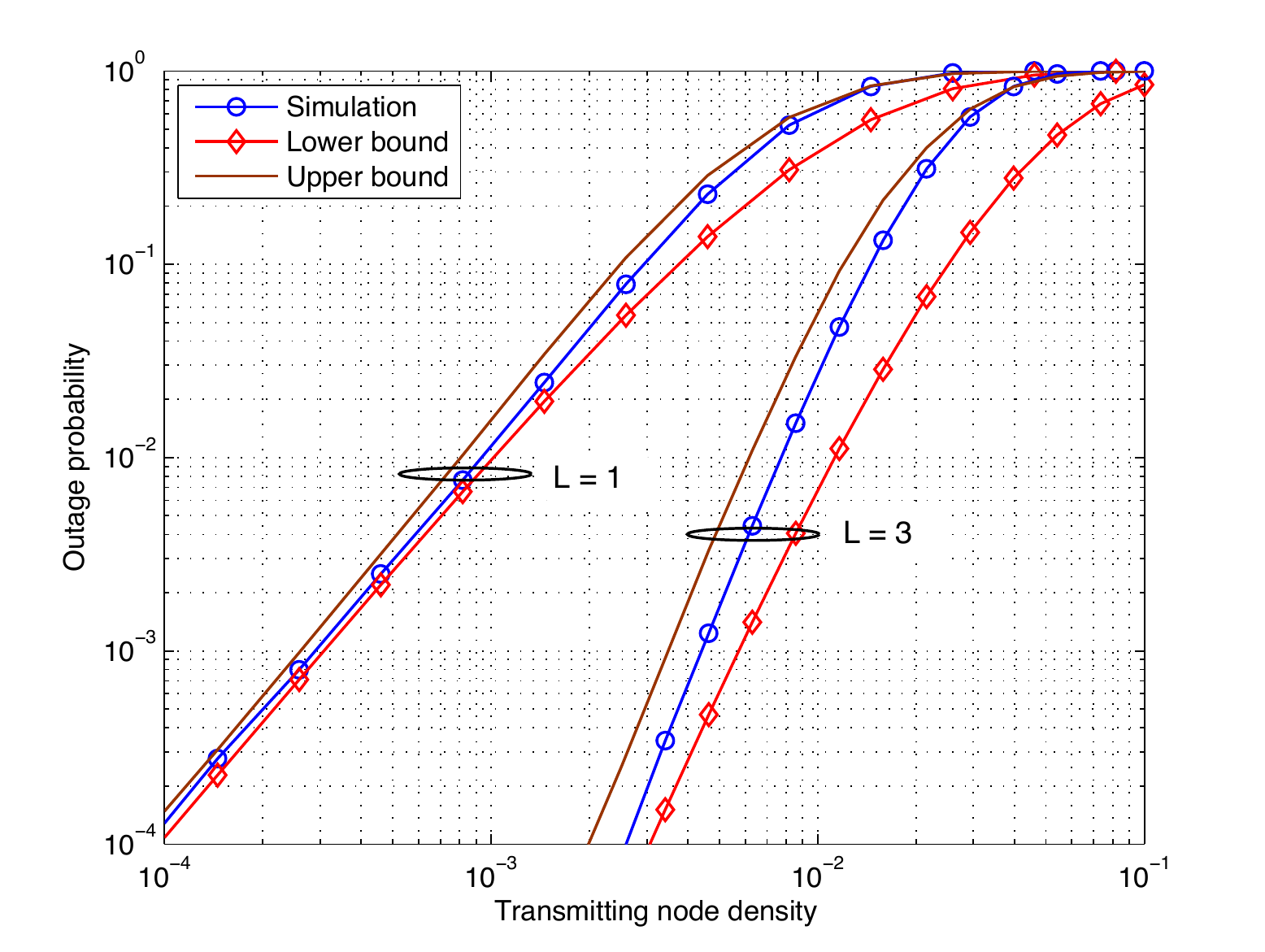}
  \caption{Outage probability for different  transmitting node densities and perfect CSI.  } \label{Fig:Pout:PerfCSI}
\end{figure}

For perfect CSI, the bounds on  outage probability from Lemma~\ref{Lem:PoutBnds} and simulated values are compared in Fig.~\ref{Fig:Pout:PerfCSI}.
It can be observed that
the outage probability is approximately proportional to $\lambda^{L+1}$.
The bounds  for $L=1$ are tighter than those for $L=3$.
Moreover, the bounds on outage probability converge to the exact values as the transmitting node density $\lambda$ decreases. These two observations can be explained by the dominance of the primary interference over the secondary one  as $L$  or $\lambda$ decreases, where the secondary interference causes the looseness of the bounds on outage probability.

{
In Fig.~\ref{Fig:Pout:ImpCSI}, given identical data rates ($\Delta B = 0$),  the outage probability  for imperfect CSI is observed to rapidly converge to the perfect-CSI counterpart as $M$ increases. In particular, for $M=11$, CSI inaccuracy increases outage probability by less than two-fold. 
}

\begin{figure}[t]
\centering
\includegraphics[width=9cm]{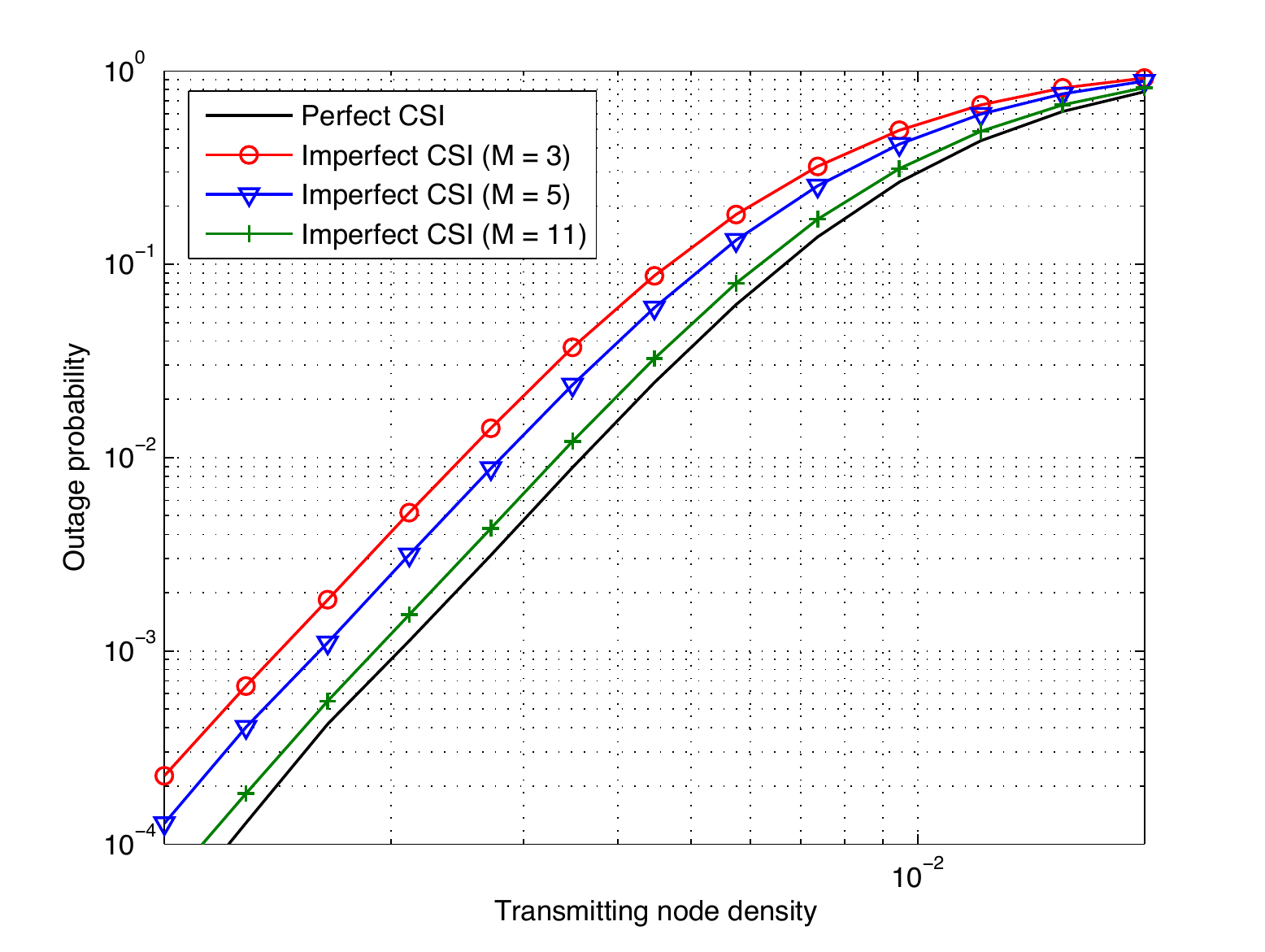}\\
  \caption{Compare outage probability for perfect and imperfect CSI given different  transmitting node densities. The training-sequence length is $M=\{3, 5, 11\}$ and the number of canceled interferers per node is $L=3$. }\label{Fig:Pout:ImpCSI}
\end{figure}

\subsection{Scaling of Transmission Capacity}\label{Section:Sim:TxCapScale}

In Fig.~\ref{Fig:TxCapScale}, asymptotic bounds on TC in Theorem~\ref{Theo:TxCap} are compared with the exact values obtained by simulation for perfect CSI and the range of target outage probability  $\epsilon \in [10^{-5}, 10^{-1}]$. The corresponding curves are identified using the legends ``asymptotic upper bound", ``asymptotic lower bound", and ``simulation". Different combinations of $(L, \alpha)$ are separated according to the cases of $L+1\leq \alpha$ and $L+1> \alpha$, corresponding to  Fig.~\ref{Fig:TxCapScale}(a) and Fig.~\ref{Fig:TxCapScale}(b), respectively. As observed from  Fig.~\ref{Fig:TxCapScale}(a), for $L+1\leq \alpha$,  the asymptotic upper bound on TC is tight even in the non-asymptotic range e.g.,  $\epsilon \in [0.01, 0.1]$. The tightness of this bound is due to the dominance of primary interference when $L$ is small. Moreover, Fig.~\ref{Fig:TxCapScale}(b) shows that for $L+1 >\alpha$ the slopes of the ``simulation" curves converge to those of the corresponding ``asymptotic upper bound" curves as the target outage probability decreases. The above observations suggest that for both $L+1\leq \alpha$ and $L+1> \alpha$, the TC scaling  for small target outage probability follows the power law with the same exponent $\frac{1}{L+1}$.

\begin{figure}[t]
\centering
\hspace{-20pt}\subfigure[$L+1\leq \alpha$]{  \includegraphics[width=9cm]{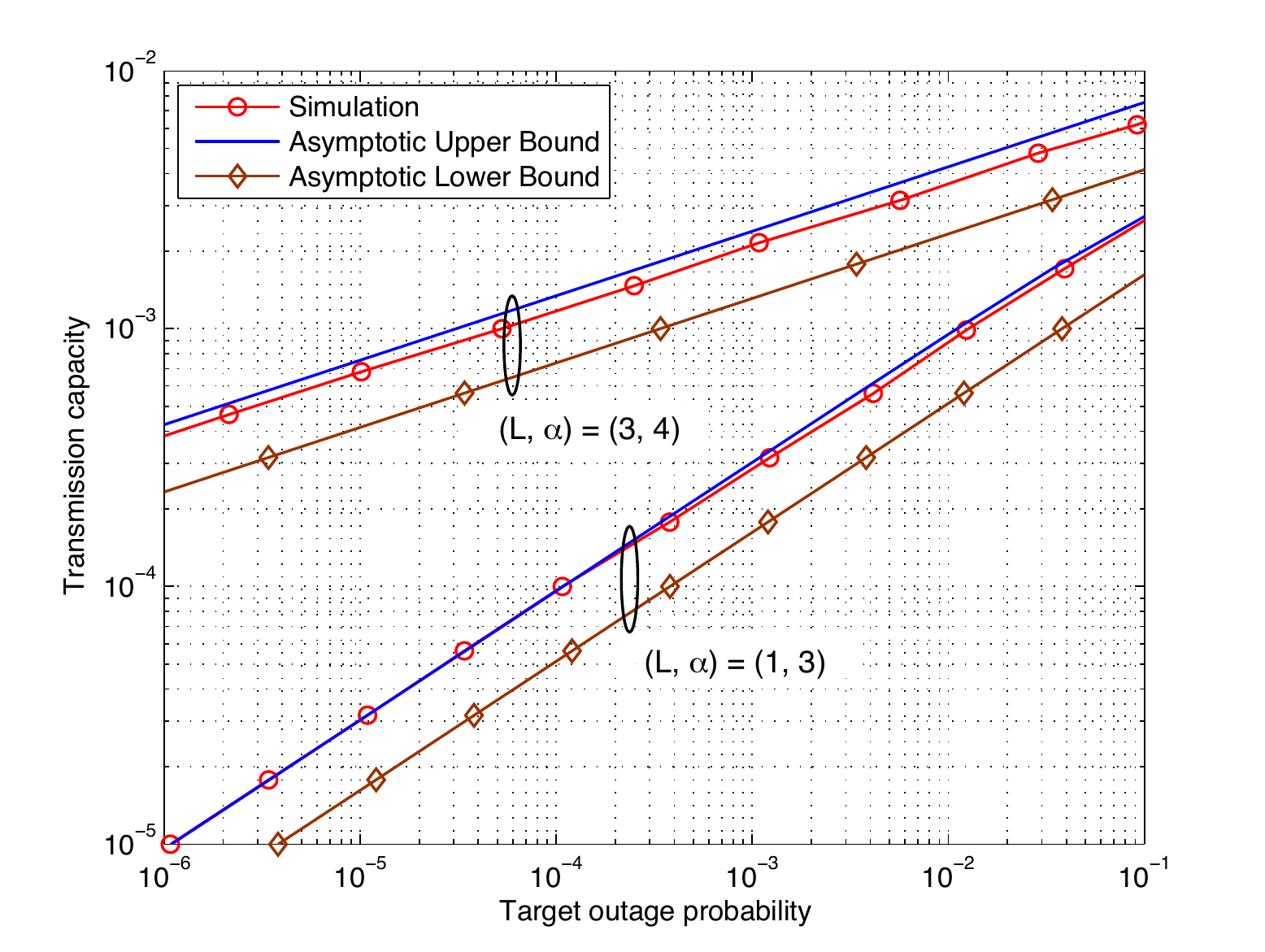}\hspace{-20pt}}
\subfigure[$L+1> \alpha$]{  \includegraphics[width=9cm]{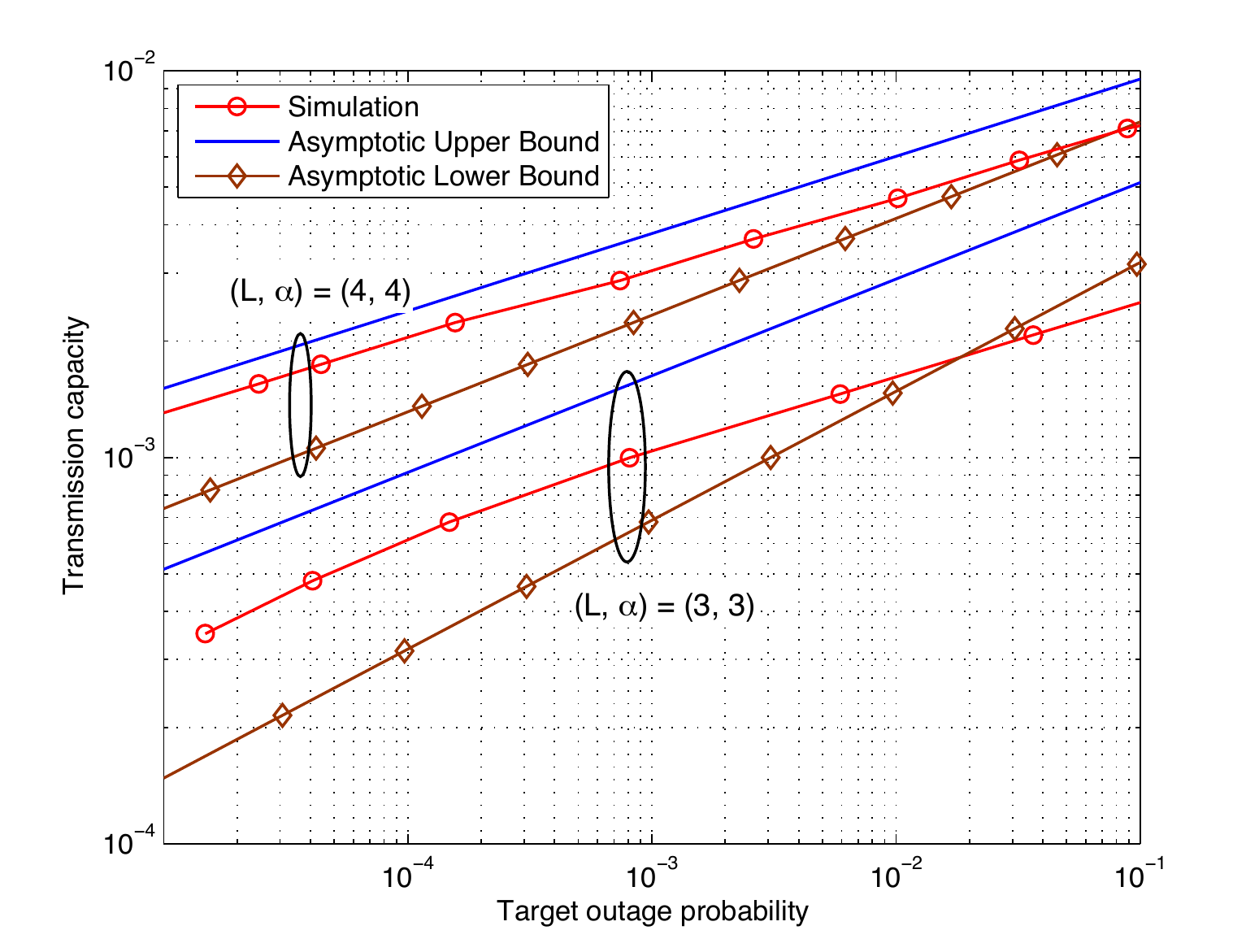}}\hspace{-20pt}\\
  \caption{Comparison between asymptotic bounds on TC and the exact values obtained by simulation for
  perfect CSI and the cases of (a) $L+1 \leq  \alpha$ and (b) $L+1 > \alpha$.}\label{Fig:TxCapScale}
\end{figure}

\subsection{Transmission Capacity vs. Size of Antenna Array}\label{Section:Sim:TxCap}
In Fig.~\ref{Fig:TxCap:Eps}, the transmission capacity is plotted for an increasing number of canceled interferers per node assuming perfect CSI. Furthermore, different target outage probabilities, namely $\epsilon = \{10^{-1}, 10^{-2}, 10^{-3}\}$, are considered.
As observed from Fig.~\ref{Fig:TxCap:Eps}, the cancellation of a few interferers by each node leads to a TC gain of an order of magnitude or more with respect to the case of no cancellation. For example, for $\epsilon = 10^{-2}$, canceling two interferers  per node provides a $25$-time TC gain. The cancellation of more interferers has a diminishing  effect on the network capacity since it becomes limited by secondary  interference. It is also observed that the outage constraint affects TC more significantly for smaller $L$. 

\begin{figure}
\centering
\includegraphics[width=10cm]{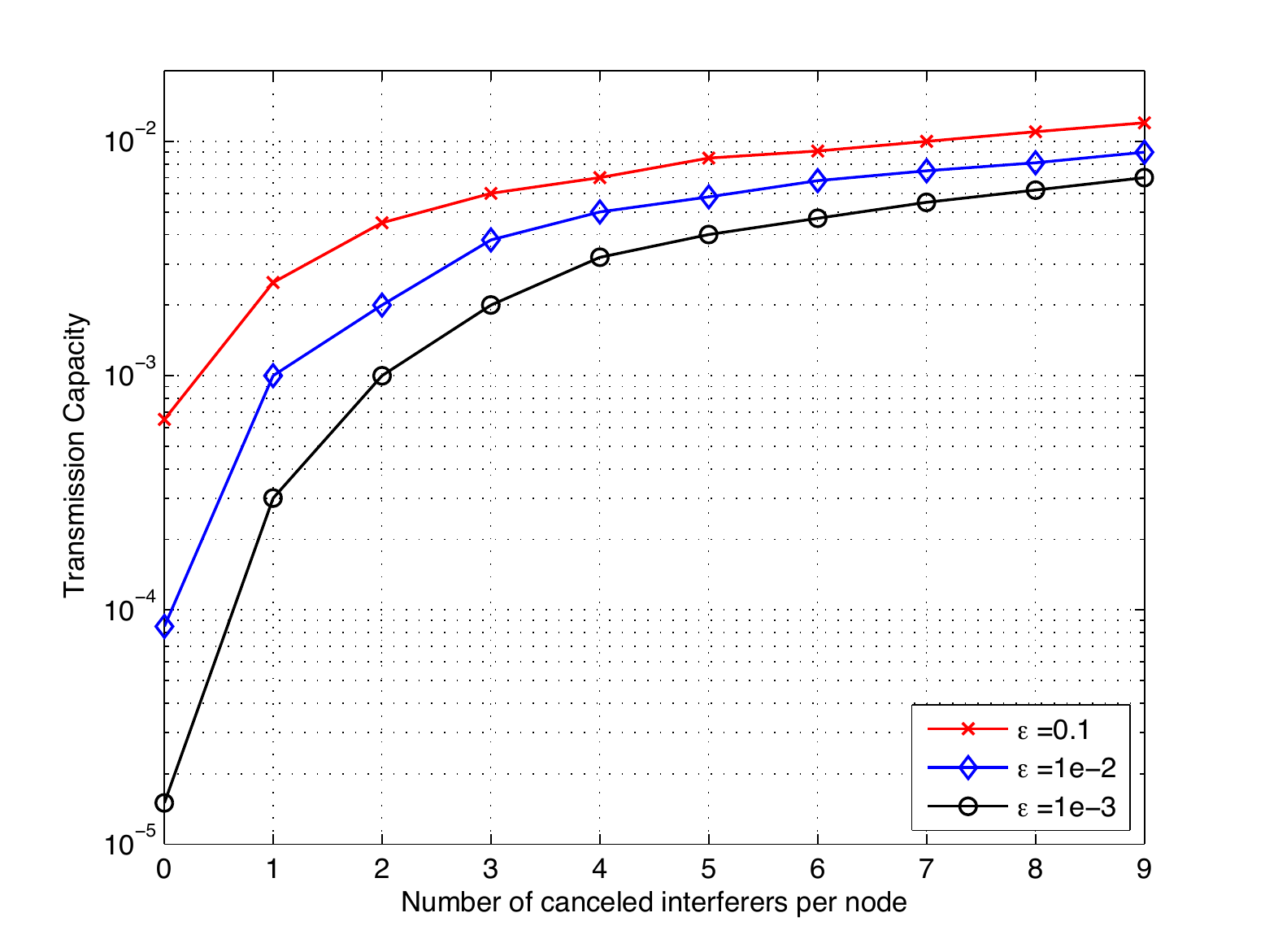}
  \caption{Transmission capacity by simulation for different  node densities and perfect CSI. The target outage probability is $\epsilon = \{10^{-1},10^{-2}, 10^{-3}\}$. }\label{Fig:TxCap:Eps}
\end{figure}

The effect of imperfect CSI on TC is shown in Fig.~\ref{Fig:TxCap:CSI}, where TC is plotted for  increasing  $L$.  The TC loss  due to CSI estimation errors is observed to reduce as $M$ increases. Such a loss is relatively small even for a moderate value of $M$. For instance, the TC reduction  is  $25$\% for $M=11$ and $L=7$. Next,  even for
small $M$ (i.e.,  $M= 3$), a TC gain of more than an order of magnitude can be achieved by  interference cancellation. This confirms  the practicality  of  interference cancellation. 
 
\begin{figure}
\centering
\includegraphics[width=10cm]{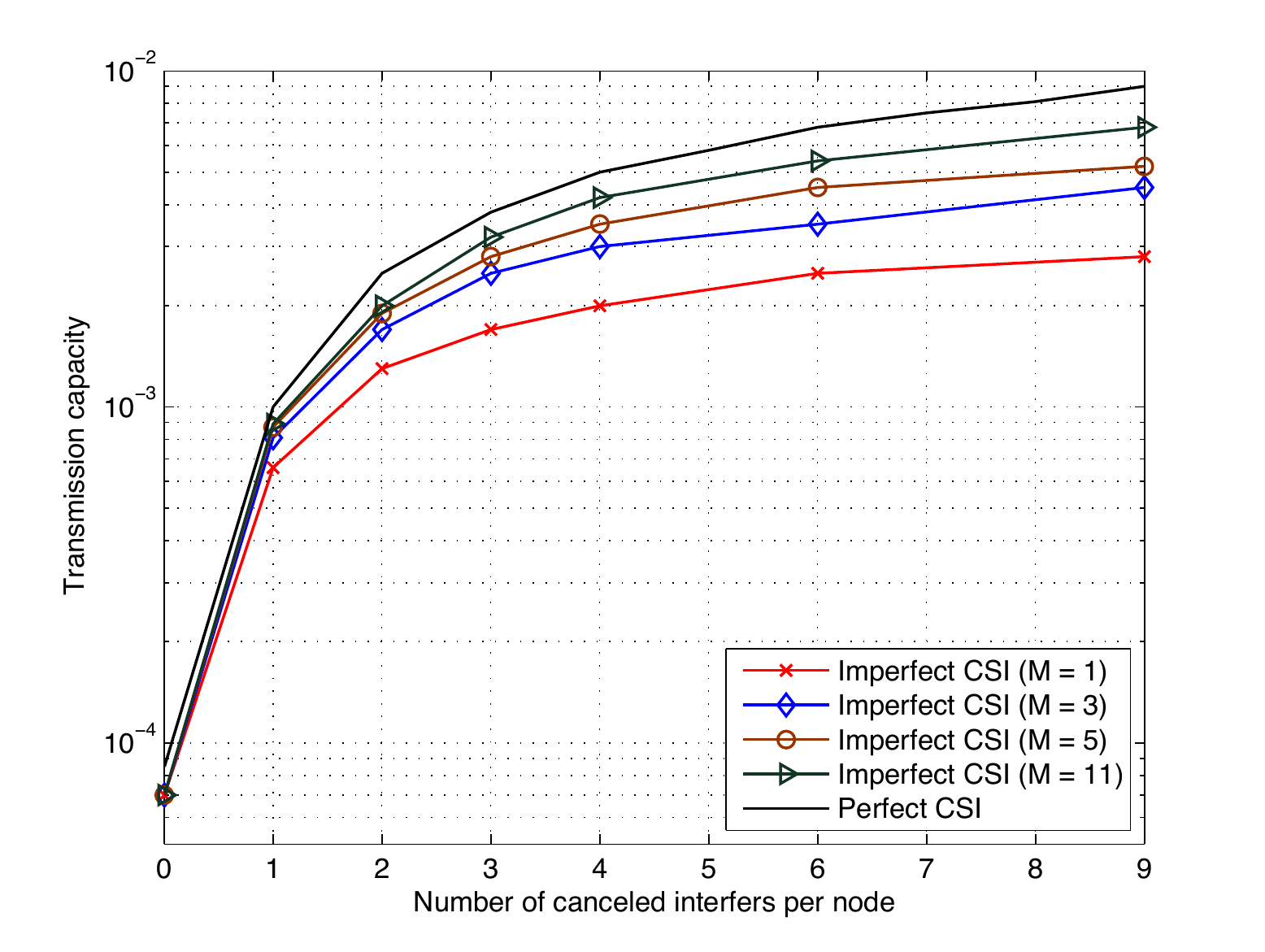}\\
  \caption{Transmission capacity by simulation  for different  node densities and imperfect CSI.  The target outage probability  is $\epsilon = 10^{-2}$. }\label{Fig:TxCap:CSI}
\end{figure}

\section*{Acknowledgement}
The authors thank Nihar Jindal, Rahul Vaze,   and Gustavo de Veciana for helpful discussions. In particular, Dr. Jindal suggested the derivation of Theorem~\ref{Theo:TxCap:LarAnt}.

\appendix

\subsection{Proof for Lemma~\ref{Lem:Second:Dist}}\label{App:Second:Dist}
Consider two disjoint  measurable subsets $\mathcal{B}$ and $\mathcal{C}$ of $\mathbb{R}^2\times [0, g)$. Let $\Xi$ be the counting function such that $\Xi(\mathcal{B})$ gives the number of elements in $\mathcal{B}$.  We first show that $\Xi(\Pi(g)\cap \mathcal{B})$ conditioned on $J_P =g$ is a Poisson random variable as follows. Define the sets $\mathcal{G}_\tau := \mathbb{R}^2\times (g-\tau, \infty)$ and $\mathcal{D}_\tau = \mathbb{R}^2\times [g-\tau, g+\tau]$ where $\tau > 0$.   Then 
\begin{equation}\label{Eq:ProbCount}
\Pr(\Xi(\Pi(g)\cap \mathcal{B}) = n\mid J_P=g) =  \lim_{\tau\rightarrow 0}\Pr(\Xi(\Phi\cap \mathcal{B}) = n \mid \Xi(\Phi\cap \mathcal{D}_\tau) = 1,   \Xi(\Phi\cap \mathcal{G}_\tau) = L). 
\end{equation}
By letting $\tau\rightarrow 0$, the probability measures of $\mathcal{B}\cap \mathcal{D}_\tau$
and $\mathcal{B}\cap \mathcal{G}_\tau$ can be made arbitrarily small.
Since $\Phi$ is Poisson distributed, $\Xi(\Phi\cap \mathcal{B})$ becomes independent with  $\Xi(\Phi\cap \mathcal{D}_\tau)$ and $\Xi(\Phi\cap \mathcal{G}_\tau)$ in the limit of $\tau\to0$. It follows that \eqref{Eq:ProbCount} can be rewritten as 
\begin{equation}\label{Eq:ProbCount:a}
\Pr(\Xi(\Pi(g)\cap \mathcal{B}) = n\mid J_P=g) =  \Pr(\Xi(\Phi\cap \mathcal{B}) = n). 
\end{equation}
In other words, $\Xi(\Pi(g)\cap \mathcal{B})$ conditioned on $J_P=g$ follows the  Poisson distribution with the same parameter as  $\Xi(\Phi\cap \mathcal{B})$. 

Next, we prove the independence between $\Xi(\Pi(g)\cap \mathcal{B})$ and $\Xi(\Pi(g)\cap \mathcal{C})$ given $J_P = g$. Their joint distribution function is written as 
\begin{equation}\label{Eq:ProbCount:b}
  \begin{aligned}
\Pr(\Xi(\Pi(g)\cap \mathcal{B})& = m,   \Xi(\Pi(g)\cap \mathcal{C})=n \mid J_P=g) \\
&= \lim_{\tau\rightarrow 0}\Pr(\Xi(\Phi\cap \mathcal{B}) = m, \Xi(\Phi\cap \mathcal{C})=n \mid \Xi(\Phi\cap \mathcal{D}_\tau) = 1, \Xi(\Phi\cap \mathcal{G}_\tau) = L).     
 \end{aligned}
\end{equation}
Following a similar argument as for getting  \eqref{Eq:ProbCount:a}, we can obtain from \eqref{Eq:ProbCount:b} that 
\begin{eqnarray}
\Pr(\Xi(\Pi(g)\cap \mathcal{B}) = m, \Xi(\Pi(g)\cap \mathcal{C})=n\mid J_P=g) &=&  \Pr(\Xi(\Phi\cap \mathcal{B}) = m, \Xi(\Phi\cap \mathcal{C})=n)\nn\\
 &=&  \Pr(\Xi(\Phi\cap \mathcal{B}) = m)\Pr(\Xi(\Phi\cap \mathcal{C})=n)\label{Eq:ProbCount:c}
\end{eqnarray}
where the last equality is due to a property of the Poisson process $\Phi$ given that $\mathcal{B} \cap \mathcal{C} = \emptyset$. The independence between 
$\Xi(\Pi(g)\cap \mathcal{B})$ and $ \Xi(\Pi(g)\cap \mathcal{C})$ conditioned on $J_P = g$ follows from 
\eqref{Eq:ProbCount:a} and \eqref{Eq:ProbCount:c}. 

Combining above results proves that given $J_P = g$,  $\Pi(g)$ is Poisson distributed on the space $\mathbb{R}^2\times [0, g)$; furthermore, $\Pi(g)$ has the mean measure as given in \eqref{Eq:MMeasure}, which leads to \eqref{Eq:MMeasure:c}. This completes the proof.

\subsection{Proof for Lemma~\ref{Lem:Primary:PC}}\label{App:Primary:PC}

The distribution  of $J_T$ can be analyzed  by writing  $J_T = r_T^{-\alpha}\rho_T$ where $\rho_T = \|\bG_T\bff_T\|^2$. Since $\bG_T$ is an i.i.d. $\mathcal{CN}(0, 1)$ matrix and $\bff_T$ fixed, $\bG_T\bff_T$ is an i.i.d. $\mathcal{CN}(0, 1)$ vector. Hence $\rho_T$ has  the chi-square distribution with $N$ complex degrees of freedom. It follows that 
\begin{equation}\label{Eq:CondProb:I}
\Pr(J_T> g \mid r_T = r) = \int_{r^\alpha g}^\infty  \frac{u^{N-1}}{\Gamma(N)}e^{-u}du. 
\end{equation}
Substituting \eqref{Eq:CondProb:I} into \eqref{Eq:MMeasure:a} gives 
\begin{eqnarray}
\mu^*(\mathcal{G})
&=& 2\pi \lambda \int_{0}^{\infty}\int_{r^\alpha g}^\infty  \frac{ru^{N-1}}{\Gamma(N)}e^{-u}du dr\nn\\
&=& 2\pi \lambda \int_{0}^{\infty}\int_0^{\l(\frac{u}{g}\r)^{\frac{1}{\alpha}}}  \frac{ru^{N-1}}{\Gamma(N)}e^{-u} dr du\nn\\
&=& \frac{\pi\lambda g^{-\frac{2}{\alpha}}}{\Gamma(N)} \int_0^\infty u^{N+\frac{2}{\alpha} - 1}e^{-u}du \nn\\
&=& \nu\lambda g^{-\frac{2}{\alpha}}  \label{Eq:MMeasure:b}
\end{eqnarray}
where $\nu$ is defined in the lemma statement. 
Then the distribution function of $\Ip$ can be written as 
\begin{equation}\label{Eq:Ip:CDF:a}
\Pr(\Ip\leq g) = \Pr(\Xi(\Psi\cap\mathcal{G}) \leq L). 
\end{equation}
Since $\Xi(\Psi\cap\mathcal{G})$ is a Poisson random variable with the mean $\mu^*(\mathcal{G})$ given in \eqref{Eq:MMeasure:b}, the desired cumulative distribution function  in \eqref{Eq:Ip:CDF} follows from \eqref{Eq:Ip:CDF:a}. 
Differentiating this function gives the probability density function of $\Ip$ as 
\begin{eqnarray}
f_{P}(g) &=& \frac{2}{\alpha}\sum_{k=0}^{L}\frac{(\nu\lambda)^{k+1}}{k!}g^{-\frac{2(k+1)}{\alpha}-1}e^{-\nu\lambda g^{-\frac{2}{\alpha}}}-\frac{2}{\alpha}\sum_{k=1}^{L}\frac{(\nu\lambda)^{k}}{(k-1)!}g^{-\frac{2k}{\alpha} -1}e^{-\nu\lambda g^{-\frac{2}{\alpha}}}\nn\\
&=& \frac{2}{\alpha} e^{-\nu\lambda g^{-\frac{2}{\alpha}}}\left\{\sum_{k=0}^{L}\frac{(\nu\lambda)^{k+1}}{k!}g^{-\frac{2(k+1)}{\alpha}-1} -\frac{2}{\alpha}\sum_{k=0}^{L-1}\frac{(\nu\lambda)^{k+1}}{k!}g^{-\frac{2(k+1)}{\alpha} -1}\right\}.\nn
\end{eqnarray}
The desired result in \eqref{Eq:PDF:G} follows from the last equation.

\subsection{Proof for Lemma~\ref{Lem:Delta}}\label{App:Delta}
Consider an arbitrary interferer $T\in\Phi$ before interference cancellation. We can write the effective channel vector as $\bh_T = J_T\tilde{\bh}_T$. The isotropicity of  $\bh_T$ has two consequences: $J_T$ and $\tilde{\bh}_T$ are independent and $\tilde{\bh}_T$ is also isotropic. Recall that $J_T$ is the criterion for selecting interferers to cancel. Thus,  the independence between $J_T$ and $\tilde{\bh}_T$ implies that the isotropicity of $\bh_T$ is unaffected by interference cancellation if node $T$ is uncanceled. 

Next, it can be observed from \eqref{Eq:MRC} that $\bv_0$ is a linear function of the vectors $\bh_0$ and $\{\bh_T\mid T\in\mathcal{T}\}$, which  are i.i.d. and isotropic. As a result, $\bv_0$ is also isotropic as well as independent with other normalized channel vectors $\{\tilde{\bh}_T\mid T\in\Phi\backslash\mathcal{T}\}$. Hence for an uncanceled interferer $T\in\Phi\backslash\mathcal{T}$, $\delta_T = |\bv_0^\dagger\tilde{\bh}_T|^2$ represents the product of two independent isotropic unit-norm random vectors   $\bv_0$ and $\tilde{\bh}_T$, which is shown in  \cite{YeungLove:RandomVQBeamf:05} to have  the beta$(1, N-1)$ distribution  in \eqref{Eq:PDF:Delta}. Moreover, the  independence between $\delta_T$ and $\delta_{T'}$ for $T\neq T'$ follows from the independence between $\tilde{\bh}_T$ and $\tilde{\bh}_{T'}$. This proves the first claim in the lemma statement. 

Last, given $T\in\Phi\backslash\mathcal{T}$, since both $\bv_0$ and $\tilde{\bh}_T$ are independent with $J_T$ as mentioned above, the independence of $\delta_T$ with $J_T$ is immediate. This completes the proof. 

\subsection{Proof of Lemma~\ref{Lem:InterfPwr:l}}\label{App:InterfPwr:l}
Define the sum pre-cancellation secondary interference power as $J\ss := \sum_{T\in\Phi\backslash(\mathcal{T}\cup\{T\sp\})}J_T$. 
Using Lemma~\ref{Lem:Second:Dist}, the application of  Campbell's Theorem gives that  \cite{StoyanBook:StochasticGeometry:95}
\begin{eqnarray}
\E[J\ss\,|\, \Ip=g] &=& \lambda\int_{\mathds{R}^2} \int_0^g u     p(|x|, du ) du dx\nn\\
&=& 2\pi \lambda\int_0^\infty \int_0^g ru     p(r, du ) du dr\nn\\
&=& 2\pi \lambda\int_0^\infty \int_0^{r^\alpha g} r^{1-\alpha} \frac{\rho^N}{\Gamma(N)} e^{-\rho}   d\rho dr\nn\\
&=&2\pi\lambda\int_0^\infty \int_{\left(\frac{\rho}{g}\right)^{\frac{1}{\alpha}}}^\infty r^{1-\alpha}\frac{\rho^N}{\Gamma(N)} e^{-\rho} dr d\rho\nn\\
&=&\frac{2\pi\lambda}{\alpha-2}g^{1-\frac{2}{\alpha}}\int_0^\infty  \frac{\rho^{N+\frac{2}{\alpha}-1}}{\Gamma(N)} e^{-\rho} d\rho\nn\\
&=&\frac{2\nu\lambda}{\alpha-2}g^{1-\frac{2}{\alpha}}. \label{Eq:Is:Mean}
\end{eqnarray}
Next, the expectation of $I\ss$ conditioned on $\Ip = g$ is given as 
\begin{eqnarray}
\E[I\ss\,|\, \Ip=g] &=&  \E\l[\sum_{T\in\Phi\backslash(\mathcal{T}\cup\{T\sp\})}J_T\delta_T\mid \Ip = g\r]\nn\\
&=& \E\l[\sum_{T\in\Phi\backslash(\mathcal{T}\cup\{T\sp\})}J_T\E(\delta_T)\mid \Ip = g\r]\label{eq:58}\\
&=& \frac{1}{N} \E[J\ss\,|\, \Ip=g] \label{Eq:PCI:Mean}
\end{eqnarray}
where~\eqref{eq:58} holds since $\delta_T$ is independent with $\Ip$ according to Lemma~\ref{Lem:Delta},  and~\eqref{Eq:PCI:Mean} uses $\E[\delta_T] = \frac{1}{N}$ derived using the distribution function in \eqref{Eq:PDF:Delta}. Substituting \eqref{Eq:Is:Mean} into \eqref{Eq:PCI:Mean} gives the desired result in \eqref{Eq:ExpInterfPwr}. 

Like \eqref{Eq:Is:Mean}, the conditional variance of $J\ss$ is obtained by applying Campbell's Theorem as follows 
\begin{eqnarray}
\var(J\ss \,|\, \Ip=g) &=& 2\pi\lambda\int_0^\infty\int_{\left(\frac{\rho}{g}\right)^{\frac{1}{\alpha}}}^\infty  r\left(r^{-\alpha}\rho\right)^2  \frac{\rho^{N-1}}{\Gamma(N)}e^{-\rho} d\rho dr\nn\\
&=& \frac{\pi\lambda}{\alpha-1}g^{2-\frac{2}{\alpha}}\int_0^\infty \rho^{N+\frac{2}{\alpha}-1}e^{-\rho} d \rho\nn\\
&=& \frac{\nu \lambda}{\alpha-1}g^{2-\frac{2}{\alpha}}. \label{Eq:Var:Js}
\end{eqnarray}
Since $\delta_T$ is independent with $\Ip$, the conditional variance of $I\ss$ is obtained by applying modified Campbell's Theorem  \cite[p77]{Kingman93:PoissonProc}
\begin{equation}
\var(I\ss \,|\, \Ip=g) = \E[\delta_T^2]\var(J\ss \,|\, \Ip=g).\label{Eq:PCI:Var}
\end{equation}
The second moment  $\E[\delta_T^2]$ can be obtained using the distribution function in \eqref{Eq:PDF:Delta} as 
\begin{eqnarray}
\E[\delta_T^2] &=& (N-1)B(3, N-1)\nn\\
&=& \frac{2}{N(N+1)}\label{Eq:Delta:M2}
\end{eqnarray}
where $B$ denotes the beta function, and \eqref{Eq:Delta:M2} applies the formula $B(x, y) = \frac{\Gamma(x)\Gamma(y)}{\Gamma(x+y)}$ from \cite[8.384]{GradRyzhik:Integral:2007}. Combining \eqref{Eq:Var:Js}, \eqref{Eq:PCI:Var} and \eqref{Eq:Delta:M2} gives the desired result in \eqref{Eq:IPwrSec:Var}. This completes the proof. 

\subsection{Proof for Lemma~\ref{Lem:PoutBnds}}\label{App:PoutBnds}
We can rewrite \eqref{Eq:PoutLB:PerfCSI} as 
\begin{equation}
\Pout \geq 1-\Pr\l(\Ip \leq  \frac{W\theta^{-1}}{\delta\sp}\r).\nn
\end{equation}
Substituting \eqref{Eq:Ip:CDF} into the above equation gives the lower bound on $\Pout$ as shown in 
\eqref{Eq:Pout:PerCSI:LB}. Similarly, we can obtain the first term of the $\Pout$ upper bound in \eqref{Eq:Pout:UB} as 
\begin{equation}\label{Eq:Ip:CCDF}
\Pr(\Ip > W\theta^{-1}) = \Lambda_1 
\end{equation}
where $\Lambda_1$ is defined  \eqref{eq:1}. Given  $\mathcal{D}_1\cup\mathcal{D}_2 = \{(w, g)\mid g < w\theta^{-1}\}$, the second  term of the $\Pout$ upper bound in \eqref{Eq:Pout:UB} can be expanded and then upper bounded as 
\begin{align}
\Pr(I\ss > W\theta^{-1} - &\Ip \mid \Ip \leq W\theta^{-1}) 
\Pr(\Ip \leq W\theta^{-1})\nn\\
\leq& \Pr((W, J_P)\in\mathcal{D}_1) + 
\Pr(I\ss > W\theta^{-1} - \Ip \mid (W, J_P)\in\mathcal{D}_2) \Pr((W, J_P)\in\mathcal{D}_2)\nn\\
\leq&\Lambda_2 + \Lambda_3  \label{Eq:Lambda:23}
\end{align}
where $\Lambda_2$ and $\Lambda_3$ are defined in the lemma statement, and \eqref{Eq:Lambda:23} applies Chebyshev's inequality in \eqref{Eq:Chebyshev}. Combining \eqref{Eq:Pout:UB}, \eqref{Eq:Ip:CCDF} and \eqref{Eq:Lambda:23} gives the desired $\Pout$ upper bound in \eqref{Eq:Pout:PerCSI:UB}. This completes the proof. 

\subsection{Proof of Theorem~\ref{Theo:TxCap:LarAnt}}\label{App:TxCap:Markov}
Given the distribution of $\Ip$ in \eqref{Eq:PDF:G}, 
and applying Campbell's Theorem, we obtain that 
\begin{eqnarray}
\E[\Ipp] &=& \E[\Ip]\E[\delta_P]\nn\\
&=& \frac{2 (\nu \lambda)^{L+1}}{N \alpha\Gamma(L+1)} \int_0^\infty g^{-\frac{2(L+1)}{\alpha}} e^{-\nu\lambda g^{-\frac{2}{\alpha}}} dg\nn\\
&=& \frac{\Gamma\(L+1-\frac{\alpha}{2}\)(\nu\lambda)^{\frac{\alpha}{2}}}{N\Gamma(L+1)}. \label{Eq:Mean:G}
\end{eqnarray}
Moreover, from \eqref{Eq:PDF:G} and \eqref{Eq:ExpInterfPwr}, 
\begin{eqnarray}
\E[I_S] &=& \E[\E[I_S\,|\, \Ip]] \nn\\
&=& \frac{4(\nu\lambda)^{L+2}}{N\alpha(\alpha-2)\Gamma(L+1)} \int_0^\infty g^{-\frac{2(L+2)}{\alpha} }e^{-\nu\lambda g^{-\frac{2}{\alpha}}} dg \label{Eq:Mean:I:a}\\
&=& \frac{2 (\nu\lambda)^{\frac{\alpha}{2}}\Gamma\(L-\frac{\alpha}{2}+2\)}{N(\alpha-2)\Gamma(L+1)}\label{Eq:Mean:I}
\end{eqnarray}
where  \eqref{Eq:Mean:I:a} uses \eqref{Eq:IPwrSec:Var}. 
 Using \eqref{Eq:Mean:G} and \eqref{Eq:Mean:I}, the total interference power $I_\Sigma$ has the following expectation: 
\begin{equation}
\E[I_\Sigma]
= \frac{2(\nu\lambda)^{\frac{\alpha}{2}}}{\alpha-2}\times \frac{L\Gamma\(L-\frac{\alpha}{2}+1\)}{N\Gamma(L+1)}. \label{Eq:SumInterf}
\end{equation}
Using Markov's inequality
\begin{equation}
\Pout \leq \E\l\{ \frac{\E[I_\Sigma]}{W\theta^{-1}}\r\}. \label{Eq:Pout:UB:a}
\end{equation}
Since $\Pout = \epsilon$, it follows from \eqref{Eq:SumInterf} and \eqref{Eq:Pout:UB:a} that 
\begin{equation}
\lambda \geq \frac{1}{\nu}\l\{\frac{N\epsilon\l(\alpha-2\r)}{2L\theta\E[W^{-1}]}\times\frac{\Gamma(L+1)}{\Gamma\l(L-\frac{\alpha}{2}+1\r)}\r\}^{\frac{2}{\alpha}}. \label{Eq:Density:LB}
\end{equation}

Next, a lower bound on $\lambda$ can be derived using the method in \cite{Jindal:RethinkMIMONetwork:LinearThroughput:2008} where the success probability $(1-\Pout)$ is upper bounded using Markov's inequality. To this end, let $\mathcal{U}$ denote the set of $(L+1)$ strongest uncanceled interferers in terms of pre-cancellation interference power. Thus, the weakest interferer in $\mathcal{U}$ corresponds to $\acute{J} = \min_{T\in \mathcal{U}} J_T$. Using above definitions, 
\begin{eqnarray}
1-\Pout &\leq& \Pr\l(\frac{W}{\sum_{T\in\mathcal{U}}J_T\delta_T}\geq \theta\r)\nn\\
&\leq& \Pr\l(\frac{W}{\acute{J}_T\sum_{T\in\mathcal{U}}\delta_T}\geq \theta\r)\nn\\
&\leq& \theta^{-1}\E[W]\E[\acute{J}^{-1}]\E\l[\frac{1}{\sum_{T\in\mathcal{U}}\delta_T}\r]\label{Eq:Success:UB}
\end{eqnarray}
where~\eqref{Eq:Success:UB} uses Markov's inequality. 
Note that the probability density function of $\acute{J}$ is given by \eqref{Eq:PDF:G} with $L$ replaced with $2L$. Thus, similar to \eqref{Eq:Mean:G}, it can be obtained that
\begin{equation}
\E[\acute{J}^{-1}] = \frac{\Gamma(2L+1+\frac{\alpha}{2})}{\Gamma(2L+1)(\nu\lambda)^{\frac{\alpha}{2}}}. 
\end{equation}
Substituting the above equation into \eqref{Eq:Success:UB} gives 
\begin{equation}
\lambda \leq \frac{1}{\nu}\l\{\frac{\E[W]}{\theta(1-\epsilon)}\times\frac{\Gamma(2L+1+\frac{\alpha}{2})}{\Gamma(2L+1)}\times \E\l[\frac{1}{\sum_{T\in\mathcal{U}}\delta_T}\r] \r\}^{\frac{2}{\alpha}}. \label{Eq:Density:UB}
\end{equation}

The scaling of $\lambda$ can be derived using \eqref{Eq:Density:LB} and \eqref{Eq:Density:UB} and applying Kershaw's inequality \cite{Kershaw:GautschiInequGammaFun:1983}:
\begin{equation}\label{Eq:Kershaw}
\l(x +\frac{s}{2}\r)^{1-s} < \frac{\Gamma(x+1)}{\Gamma(x+s)} < \l( x - \frac{1}{2} + \sqrt{s+\frac{1}{4}}\r)^{1-s}, \quad x > 0, \ 0 < s < 1.
\end{equation}
Specifically, given  $\Gamma(1+x) = x\Gamma(x)$, we can write 
\begin{equation}
\frac{\Gamma(L+1)}{\Gamma\(L-\frac{\alpha}{2}+1\)}
= \frac{\Gamma\(L - \lceil\frac{\alpha}{2}\rceil  + 2\)}{\Gamma\(L+1 - \lceil\frac{\alpha}{2}\rceil  + \Delta\alpha \)}\prod_{n=0}^{\lceil\frac{\alpha}{2}\rceil - 2}\(L-n\) \label{Eq:Pout:Markov:a}
\end{equation}
where $\Delta\alpha = \lceil\frac{\alpha}{2}\rceil - \frac{\alpha}{2}$ and hence $0\leq \Delta\alpha< 1$. 
Using Kershaw's inequality, for $L > \l\lceil\frac{\alpha}{2}\r\rceil$
\begin{equation}\label{Eq:Gamma:a}
\(L - \l\lceil\frac{\alpha}{2}\r\rceil+1 + \frac{\Delta \alpha}{2} \)^{1-\Delta\alpha}< \frac{\Gamma\(L - \lceil\frac{\alpha}{2}\rceil  + 2\)}{\Gamma\(L+1 - \lceil\frac{\alpha}{2}\rceil  + \Delta\alpha \)} < \(L - \l\lceil\frac{\alpha}{2}\r\rceil +\frac{1}{2}+ \sqrt{\Delta\alpha + \frac{1}{4}}\)^{1-\Delta\alpha}.
\end{equation}
It follows from  \eqref{Eq:Pout:Markov:a} and \eqref{Eq:Gamma:a} that 
\begin{equation}
\lim_{L\rightarrow\infty}
\frac{\Gamma(L+1)}{L^{\frac{\alpha}{2}}\Gamma\(L-\frac{\alpha}{2}+1\)}=1. \label{Eq:GammaRatio:Limit}
\end{equation}
Similarly, we can show that $\nu$ as defined in Lemma~\ref{Lem:Primary:PC} scales as:
\begin{equation}
\lim_{L\rightarrow\infty}
\frac{\nu}{\pi L^{\frac{2}{\alpha}}}=1. \label{Eq:Nu:Limit}
\end{equation}
Combining \eqref{Eq:Density:LB}, \eqref{Eq:GammaRatio:Limit} and \eqref{Eq:Nu:Limit} gives 
\begin{equation}
\liminf_{L\rightarrow\infty} \frac{\lambda}{L^{1-\frac{2}{\alpha}}} \geq \frac{1}{\pi}\l\{\frac{\epsilon\l(\alpha-2\r)}{2\theta\E[W^{-1}]}\r\}^{\frac{2}{\alpha}}. \label{Eq:LambdaLim:LB}
\end{equation}
Again, the application of Kershaw's inequality yields 
\begin{equation}
\lim_{L\rightarrow\infty}\frac{\Gamma(2L+1+\frac{\alpha}{2})}{(2L)^{\frac{\alpha}{2}}\Gamma(2L+1)} = 1. \label{Eq:GammaRatio:Limit:a}
\end{equation}
Moreover, it follows from the strong law of larger numbers that $\lim_{L\rightarrow\infty}\frac{\sum_{T\in\mathcal{U}}\delta_T}{L+1}=\E[\delta_T] = \frac{1}{L}$. Since $(N-L)$ is fixed,  
\begin{equation}
\lim_{L\rightarrow\infty}\E\l[\frac{1}{\sum_{T\in\mathcal{U}}\delta_T}\r] =1. \label{Eq:SumEps:Limit}
\end{equation}
Substituting \eqref{Eq:GammaRatio:Limit:a} and \eqref{Eq:SumEps:Limit} into \eqref{Eq:Density:UB} gives 
\begin{equation}
\limsup_{L\rightarrow\infty} \frac{\lambda}{L^{1-\frac{2}{\alpha}}} \leq \frac{2}{\pi}\l\{\frac{\E[W]}{\theta(1-\epsilon)}\r\}^{\frac{2}{\alpha}}. \label{Eq:LambdaLim:UB}
\end{equation}
The desired result follows from \eqref{Eq:LambdaLim:LB} and \eqref{Eq:LambdaLim:UB}
as well as the TC definition.

\subsection{Proof of Lemma~\ref{Lem:AsymPout:PerfCSI}} \label{App:AsymPout:PerfCSI}
For $\lambda \rightarrow 0$, we can obtain from \eqref{Eq:Pout:PerCSI:LB} that 
\begin{eqnarray}
\Pout^\ell(\lambda) &=& \sum_{k=L+1}^\infty  \frac{\l(\nu\lambda\theta^{\frac{2}{\alpha}}\r)^k}{\Gamma(k+1)}\E\l[\l(\frac{W}{\delta\sp}\r)^{-\frac{2k}{\alpha}}e^{-\nu\lambda\theta^{\frac{2}{\alpha}}\l(\frac{W}{\delta\sp}\r)^{-\frac{2}{\alpha}}}\r]\nn\\
&=&\E\l[\l(\frac{W}{\delta\sp}\r)^{-\frac{2}{\alpha}(L+1)}\r]\frac{\l(\nu\lambda\theta^{\frac{2}{\alpha}}\r)^{L+1}}{\Gamma(L+2)} + O(\lambda^{L+2}) \label{eq:pouta}\\
&=& \kappa_1\lambda^{L+1} + O(\lambda^{L+2})  \label{Eq:PoutLB:Asymp}
\end{eqnarray}
where~\eqref{eq:pouta} uses the distributions of $W$ and $\delta\sp$ in \eqref{Eq:PDF:W} and Lemma~\ref{Lem:Delta}, respectively, and $\kappa_1$ is defined in the lemma statement. 
The first inequalities in \eqref{Eq:PoutScale:a} and \eqref{Eq:PoutScale:b} follow from the last equation. 

Next, we prove the second inequalities in \eqref{Eq:PoutScale:a} and \eqref{Eq:PoutScale:b} as follows. Similar to \eqref{Eq:PoutLB:Asymp}, for $\lambda\rightarrow 0$,  $\Lambda_1$ in  \eqref{eq:1} is obtained as
\begin{eqnarray}
\Lambda_1 &=&\frac{E\l[W^{-\fa(L+1)}\r]\l(\nu\theta^{\frac{2}{\alpha}}\r)^{L+1}}{\Gamma(L+2)} \lambda^{L+1}+ O(\lambda^{L+2}). \label{Eq:Lambda:1}
\end{eqnarray}
The asymptotic expression for  $\Lambda_2$ in \eqref{eq:2} is derived as 
\begin{eqnarray}
\Lambda_2(\lambda) &=& \frac{2(\nu\lambda)^{L+1}}{\alpha\Gamma(L+1)}\int_0^\infty \int_{w\theta^{-1} + O(\lambda)}^{w\theta^{-1}} g^{-\frac{2(L+1)}{\alpha} -1}e^{-\nu\lambda g^{-\frac{2}{\alpha}}}d g f_W(w)dw\label{Eq:Lambda:2a}\\
&=& \frac{2(\nu\lambda)^{L+1}}{\alpha\Gamma(L+1)}\int_0^\infty \l[\l(w\theta^{-1}\r)^{-\frac{2(L+1)}{\alpha}-1} + O(\lambda)\r]\times O(\lambda) f_W(w)dw\nn\\
&=& O(\lambda^{L+2} ). 
 \label{Eq:Lambda:2}
\end{eqnarray}
where \eqref{Eq:Lambda:2a} uses \eqref{Eq:ExpInterfPwr}. To derive the asymptotic expression for $\Lambda_3$ in \eqref{eq:3}, it is split into two terms as $\Lambda_3 = \Lambda_{3,1} + \Lambda_{3,2}$ where 
\begin{eqnarray}
\Lambda_{3,i} = \iint\limits_{(w,g)\in\mathcal{D}_{2,i}}\min\left\{\frac{\var(I \,|\, \Ip=g)}{\left\{w\theta^{-1} - g -\E\left[I \,|\, \Ip=g\right]\right\}^2}, 1\right\} f_W(w)f\sp(g)dwdg
\label{Eq:Lambda:3b}
\end{eqnarray}
with 
\begin{eqnarray}
\mathcal{D}_{2,1}  &=& \left\{ (w, g) \mid  \frac{w \theta^{-1} }{2}\leq g + \E[I\ss\mid  \Ip = g] <  w \theta^{-1}   \right\} \nn\\
\mathcal{D}_{2,2}  &=& \left\{ (w, g) \mid  0\leq g + \E[I\ss\mid  \Ip = g] < \frac{w \theta^{-1} }{2}  \right\}. \label{Eq:D2b:Def}
\end{eqnarray} 
For $\lambda\rightarrow 0$, $\Lambda_{3,1}$ is obtained as 
\begin{eqnarray}
\Lambda_{3,1} &\leq &\iint\limits_{(w,g)\in\mathcal{D}_{2,1}} f_W(w)f\sp(g)dwdg\nn\\
&=& \frac{(\nu\lambda)^{L+1}}{\Gamma(L+1)}\int_0^\infty\int_{(w\theta^{-1})^{-\frac{2}{\alpha}} + O(\lambda)}^{\l(\frac{w\theta^{-1}}{2}\r)^{-\frac{2}{\alpha}} + O(\lambda)}g^Ldgf_W(w)dw + O(\lambda^{L+2})\nn\\
&=& \frac{\l[2^{\frac{2(L+1)}{\alpha}}-1\r]\E\l[W^{-\frac{2(L+1)}{\alpha}}\r]\l(\theta^{\frac{2}{\alpha}}\nu\r)^{L+1}}{\Gamma(L+2)} \lambda^{L+1}+ O(\lambda^{L+2}).\label{Eq:Lambda:3a}
\end{eqnarray}
Next, $\Lambda_{3,2}$ defined in \eqref{Eq:Lambda:3b} is upper bounded as 
\begin{eqnarray}
\Lambda_{3,2} &\leq& \iint\limits_{(w,g)\in \mathcal{D}_{2,2}} \frac{\var(I\ss \mid \Ip=g)}{(w\theta^{-1} - g - \E[I\ss \mid \Ip=g])^2}f\sp(g)f_W(w)dw dg\nn\\
&{\leq}& \int_0^\infty \int^{\frac{w \theta^{-1} }{2} + O(\lambda)}_0  \frac{4\var(I\ss \mid \Ip=g)}{(w\theta^{-1})^2}f\sp(g)dgf_W(w)dw.\label{Eq:Lambda3:b}
\end{eqnarray}
where \eqref{Eq:Lambda3:b} holds since $g + \E[I\ss\mid  \Ip = g] < \frac{w \theta^{-1} }{2} $ according to \eqref{Eq:D2b:Def}. 
To simplify notation, define
$
\eta =  \frac{8\theta^2 }{N(N+1)(\alpha-1)\Gamma(L+1)}. \label{Eq:Eta:Def}
$
Substituting the distribution functions in \eqref{Eq:PDF:G} and \eqref{Eq:IPwrSec:Var}  into \eqref{Eq:Lambda3:b} gives 
\begin{eqnarray}
\Lambda_{3,2} &\leq& \frac{2}{\alpha}\eta (\nu\lambda)^{L+2} \int_0^\infty w^{-2}\int_0^{\frac{w \theta^{-1} }{2} + O(\lambda)}g^{-\frac{2}{\alpha}(L+2)+1}e^{-\nu\lambda g^{-\fa}} dgf_W(w)dw\nn\\
&=& \eta(\nu\lambda)^\alpha \int_0^\infty w^{-2}\int^\infty_{\l(\frac{w\theta^{-1}}{2}\r)^{-\fa}\nu\lambda + O(\lambda^2)}g^{L-\alpha +1}e^{-g} dg f_W(w)dw. \label{Eq:Lambda3:c}
\end{eqnarray}
For $L-\alpha +1 < 0$, we obtain using \eqref{Eq:Lambda3:c} that 
\begin{eqnarray}
\Lambda_{3,2} &\leq& \eta(\nu\lambda)^\alpha \int_0^\infty w^{-2}\l[\l(\frac{w\theta^{-1}}{2}\r)^{-\fa}\nu\lambda\r]^{L-\alpha +1}f_W(w)dw + O(\lambda^{L+2})\nn\\
&=& \frac{2^{\frac{2}{\alpha}(L+1)+1}(L+1)}{N(N+1)(\alpha-1)}\times \frac{\E\l[W^{-\frac{2}{\alpha}(L+1)}\r](\theta^{\frac{2}{\alpha}}\nu)^{L+1}}{\Gamma(L+2)}\lambda^{L+1}+O(\lambda^{L+2}). \label{Eq:Lambda:3b:Case2}
\end{eqnarray}
Since $\Pout^u \leq \Lambda_1 + \Lambda_2 + \Lambda_{3,1} + \Lambda_{3,2}$, 
\begin{equation}
\limsup_{\lambda\rightarrow 0 }\frac{\Pout^u}{\lambda^{L+1}} \leq \liminf_{\lambda\rightarrow 0 }\frac{\Lambda_1}{\lambda^{L+1}} + \liminf_{\lambda\rightarrow 0 }\frac{\Lambda_2}{\lambda^{L+1}}+\liminf_{\lambda\rightarrow 0 }\frac{\Lambda_{3,1}}{\lambda^{L+1}} + \liminf_{\lambda\rightarrow 0 }\frac{\Lambda_{3,2}}{\lambda^{L+1}}. \label{Eq:Lambda:Limit}
\end{equation}
The substitution of  \eqref{Eq:Lambda:1},  \eqref{Eq:Lambda:2}, \eqref{Eq:Lambda:3a} and  \eqref{Eq:Lambda:3b:Case2}  into \eqref{Eq:Lambda:Limit} gives that 
\begin{eqnarray}
\limsup_{\lambda\rightarrow 0 }\frac{\Pout^u}{\lambda^{L+1}} &\leq&\l[1+\frac{2(L+1)}{N(N+1)(\alpha-1)}\r]\frac{2^{\frac{2}{\alpha}(L+1)}\E\l[W^{-\frac{2}{\alpha}(L+1)}\r](\theta^{\frac{2}{\alpha}}\nu)^{L+1}}{\Gamma(L+2)} \nn\\
&{\leq} & 2\times\frac{2^{\frac{2}{\alpha}(L+1)}\E\l[W^{-\frac{2}{\alpha}(L+1)}\r](\theta^{\frac{2}{\alpha}}\nu)^{L+1}}{\Gamma(L+2)}
\end{eqnarray}
where the last inequality holds since $N \geq L +1$ and $\alpha > 2$. 
The second inequality  in \eqref{Eq:PoutScale:a} follows.  
For $L-\alpha +1 \geq 0$,  $\Gamma(L-\alpha +2)$ is finite and hence we obtain  from \eqref{Eq:Lambda3:c} that 
\begin{equation}
\Lambda_{3,2} \leq \kappa_3 \lambda^\alpha+ O(\lambda^{\alpha + 1}) \label{Eq:Lambda:3b:Case1}
\end{equation}
where $\kappa_3$ is defined in the lemma statement. 
Substituting \eqref{Eq:Lambda:1},  \eqref{Eq:Lambda:2}, \eqref{Eq:Lambda:3a} and \eqref{Eq:Lambda:3b:Case1} into \eqref{Eq:Lambda:Limit}  gives the second equality in \eqref{Eq:PoutScale:b}. This completes the proof.

\subsection{Proof of Theorem~\ref{Theo:PoutBnds:ImpCSI}}\label{App:PoutBnds:ImpCSI}
{Using \eqref{Eq:SIR_b} and by definition, the outage probability for  imperfect CSI is given as 
\begin{equation}
\tPout(\tilde{\theta}) =  \E\left[ \Pr\left(\left. W\tilde{\theta}^{-1} \leq \sigma^2_{R} + I_{\Sigma}\right| \Phi, \{\bh_{T}\}\right)\right]. \label{Eq:App:Pout}
\end{equation}
Conditioned on $\Phi$ and $\{\bh_{T}\}$, the randomness of the residual interference $I_R$ in \eqref{Eq:ResIntfPwr:Pwr} depends only on the data symbols $\{x_T\}$ that follow i.i.d. $\mathcal{CN}(0, 1)$ distributions. Thus, the conditional variance $\sigma^2_{R}$ of $I_R$ can be written as 
\begin{eqnarray}
\sigma_{R}^2 &=& \frac{1}{M} \sum_{T\in\mathcal{T}} \left|\sum_{T'\in\Phi\backslash\mathcal{T}}  r_{T'}^{-\alpha/2}  \bv_0^\dagger\bG_{T'}\tilde{x}_{T',T}\right|^2\nn\\
&=&\frac{1}{M} \sum_{T\in\mathcal{T}}\left|\sum_{T'\in\Phi\backslash\mathcal{T}} I_{T'}\frac{\bv_0^\dagger\bG_{T'}\tilde{x}_{T',T}}{|\bv_0^\dagger\bG_{T'}|}\right|^2. \label{Eq:ResInterf:Var:a}
\end{eqnarray}
Let $\sim$ represent equivalence in distribution. Since   $\l\{\frac{\bv_0^\dagger\bG_{T'}\tilde{x}_{T',T}}{|\bv_0^\dagger\bG_{T'}|}\r\}$ consists of  i.i.d. $\mathcal{CN}(0,1)$ elements,   we obtain from \eqref{Eq:ResInterf:Var:a} that conditioned on  $\Phi$ and  $\{\bh_T\}$, 
\begin{eqnarray}
\sigma_{R}^2 &\sim& \frac{1}{M}  \sum_{T'\in\Phi\backslash\mathcal{T}} I_{T'} \sum_{T\in\mathcal{T}}z_{T}\nn\\
&\sim&  \frac{\zeta }{M}I_{\Sigma}\label{Eq:ResInterf:Var:b}
\end{eqnarray}
where  $\{z_T \}$ are   i.i.d. exponential random variables  with unit mean and  $\zeta$ is a chi-square random variable having  $L$ complex degrees of freedom.
 By substituting \eqref{Eq:ResInterf:Var:b}
 into \eqref{Eq:App:Pout}, we obtain that 
\begin{equation}
\tPout(\tilde{\theta})    = \Pr\left(W\tilde{\theta}^{-1} \leq I_{\Sigma} \left(1 + \frac{\zeta}{M}\right)\right).\label{Eq:Pout:App:c}
\end{equation}
Given $Z> 0$, the above expression can be expanded as  
\begin{eqnarray}
\tPout(\tilde{\theta})    &=& \Pr\left(\left. W\tilde{\theta}^{-1} \leq I_{\Sigma} \left(1 + \frac{\zeta}{M}\right)\right| \zeta \leq Z \right)\Pr(\zeta \leq  Z )+\\
&& \Pr\left(\left. W\tilde{\theta}^{-1} \leq I_{\Sigma} \left(1 + \frac{\zeta}{M}\right)\right| \zeta > Z \right)\Pr(\zeta > Z )\nn\\
&\leq&  \Pr\left(W\tilde{\theta}^{-1} \leq I_{\Sigma} \left(1 + \frac{Z}{M}\right)\right)+ \Pr(\zeta > Z ).
\label{Eq:App:f}
\end{eqnarray}
By setting $\theta = \l(1+\frac{Z}{M}\r)\tilde{\theta}$,  the inequality in \eqref{Eq:App:f} reduces to 
\begin{equation}
\tPout(\tilde{\theta}) \leq \Pout(\theta) + \Pr(\zeta > Z ). \nn
\end{equation}
It follows that 
\begin{eqnarray}
\Delta P &\leq& \Pr(\zeta > Z )\nn\\
&\leq& 1-\left(1-e^{-\omega Z}\right)^{L}\label{Eq:Alzer}\\
&\leq& Le^{-\omega Z}\label{Eq:App:j}
\end{eqnarray}
where $\omega$ is defined in the theorem statement,  \eqref{Eq:Alzer} applies Alzer's inequalities for the incomplete Gamma function  \cite{Alzer:GamFunIneq:97}, and \eqref{Eq:App:j} uses Bernoulli's inequality. 
Moreover, given $\theta = \l(1+\frac{Z}{M}\r)\tilde{\theta}$,  the rate loss is bounded  as
\begin{eqnarray}
\Delta B &=& \log_2(1+\theta) - \log_2\(1+\frac{\theta}{1+\frac{Z}{M}}\)\nn\\
&\leq& \log_2(1+\theta) - \log_2\(\frac{1+\theta}{1+\frac{Z}{M}}\)\nn\\
&=& \log_2\(1+\frac{Z}{M}\).\label{Eq:App:k}
\end{eqnarray}
From \eqref{Eq:App:j} and  \eqref{Eq:App:k}, to satisfy the constraints $\Delta P \leq \vartheta_p$ and $\Delta B \leq \vartheta_b$, it is sufficient that
\begin{eqnarray}
Le^{-\omega Z} &=& \vartheta_{p}\nn\\
\log_2\(1+\frac{Z}{M}\) &=& \vartheta_{b}. 
\end{eqnarray}
Solving the above equations gives the training sequence length in \eqref{Eq:TrainLen}. 

Finally, from  \eqref{Eq:TxCap} and \eqref{Eq:TXCap:ImpCSI}, the capacity loss defined in \eqref{Eq:CapDiff} is upper bounded as 
\begin{eqnarray}
\Delta C &=& (1-\epsilon) \lambda\log_2(1+\theta) - (1-\tPout) \lambda \log_2(1+\tilde{\theta})\nn\\
&=& \lambda\l[(1-\epsilon)  \log_2(1+\theta) - (1-\tPout)  \log_2(1+\theta) + \right.\nn\\
&&\l. (1-\tPout)  \log_2(1+\theta) -(1-\tPout) \log_2(1+\tilde{\theta})\r]\nn\\
&=& \lambda\l[\Delta P  \log_2(1+\theta)  + (1-\tPout)  \Delta B\r]\nn\\
&=& C\l[\frac{\Delta P}{1-\epsilon}   + \frac{1-\tPout}{1-\epsilon} \times\frac{ \Delta B}{\log_2(1+\theta)}\r]\nn\\
&\leq& C\l[\frac{\Delta P}{1-\epsilon}   + \frac{ \Delta B}{\log_2(1+\theta)}\r].\nn
\end{eqnarray}
The desired result in \eqref{Eq:CapLoss} follows from the above inequality, completing the proof. 

}

\subsection{Proof of Theorem~\ref{Theo:TxCap:ImpCSI}} \label{App:TxCap:ImpCSI}
{

We prove in the sequel that the training sequence length stated in the theorem achieves the TC scaling  in \eqref{Eq:CapLaw:a} with $\Delta P \overset{\epsilon}{\rightarrow} 0$ and $\Delta B \overset{\epsilon}{\rightarrow} 0$. The parallel proof concerning the other capacity scaling in \eqref{Eq:CapLaw:b}  is similar and omitted for brevity. Recall that $\lambda$ is fixed regardless of whether CSI is perfect. Thus, it follows from \eqref{Eq:CapLaw:a} that for sufficiently small $\epsilon$, 
\begin{equation}\label{Eq:CapLaw:ImpCSI}
\kappa_2^{-\frac{1}{L+1}}\leq  \frac{\tilde{C}(\epsilon)}{\log(1+\tilde{\theta})\epsilon^{\frac{1}{L+1}}}  \leq \kappa_1^{-\frac{1}{L+1}}. 
\end{equation}
As in Appendix~\ref{App:PoutBnds:ImpCSI}, we set $\theta = \l(1+\frac{Z}{M}\r)\tilde{\theta}$ with $Z > 0$ and thus \eqref{Eq:App:j} holds. Furthermore, 
we choose $Z$ and $M$ such  that $Le^{-\omega Z} = \epsilon^{1+\varrho}$ and $\frac{Z}{M} = \epsilon^{\varrho}\log\frac{1}{\epsilon}$ with $\varrho >0$, which yields  $\Delta P \overset{\epsilon}{\rightarrow} 0$ as a result of  \eqref{Eq:App:j} and has two other consequences:
\begin{eqnarray}
M &=&  \frac{1+\varrho}{\omega}\epsilon^{-\rho} + \frac{\log L}{\omega}\times\frac{\epsilon^{-\varrho}}{\log\frac{1}{\epsilon}}\label{Eq:M:Asymp}\\
&=& \frac{1+\varrho}{\omega}\epsilon^{-\rho} + o\l(\epsilon^{-\varrho}\r)\label{Eq:App:o}\\
\lim_{\epsilon \rightarrow 0} \frac{\theta}{\tilde{\theta}} &=& \lim_{\epsilon \rightarrow 0} \l(1+\frac{Z}{M}\r)\nn\\
&=&\lim_{\epsilon \rightarrow 0} \(1+\epsilon^{\varrho}\log\frac{1}{\epsilon}\) \nn \\
&=& 1.\label{Eq:App:m}
\end{eqnarray}
It follows from \eqref{Eq:App:m} that $\Delta B\overset{\epsilon}{\rightarrow} 0$ and $\lim_{\epsilon \rightarrow 0} \frac{\log_2(1+\theta)}{\log_2(1+\tilde{\theta})} = 1$. Combining the last inequality and \eqref{Eq:CapLaw:ImpCSI} gives that for sufficiently small $\epsilon$, 
\begin{equation}\label{Eq:CapLaw:ImpCSI:a}
\kappa_2^{-\frac{1}{L+1}}\leq  \frac{\tilde{C}(\epsilon)}{\log(1+\theta)\epsilon^{\frac{1}{L+1}}}  \leq \kappa_1^{-\frac{1}{L+1}} 
\end{equation}
Therefore, the TC scaling for imperfect CSI is identical to the perfect-CSI counterpart in \eqref{Eq:CapLaw:a}.  Furthermore, the scaling of $M$ in  \eqref{Eq:M:Scale} follows from \eqref{Eq:M:Asymp}. Since above results hold for an arbitrary $\varrho > 0$, 
the proof is complete. 

}

\renewcommand{\baselinestretch}{1.3}
\bibliographystyle{ieeetr}

\end{document}